\documentclass[10pt,twocolumn]{IEEEtran}
\usepackage[utf8]{inputenc}
\usepackage{amsmath,amssymb}
\usepackage{graphicx}
\usepackage{cite}
\usepackage{booktabs}
\usepackage{multirow}
\usepackage{algorithm}
\usepackage[dvips]{epsfig}
\usepackage{epsfig}
\usepackage[dvips]{graphicx}
\usepackage{algpseudocode}
\usepackage{hyperref}
\usepackage{textcomp}
\usepackage{xcolor}
\usepackage{bm}
\usepackage{cite}
\usepackage{mathtools}
\usepackage{float}
\usepackage{epstopdf}
\usepackage{algorithm}
\usepackage{algpseudocode}

\newcommand{\ket}[1]{\left|#1\right\rangle}
\newcommand{\bra}[1]{\left\langle#1\right|}
\newcommand{\braket}[2]{\left\langle#1\middle|#2\right\rangle}
\newcommand{\ketbra}[2]{\left|#1\right\rangle\!\left\langle#2\right|}
\newcommand{\Tr}{\operatorname{Tr}}
\newcommand{\eop}{\mathcal{E}}
\newtheorem{theorem}{Theorem}

\begin{document}


\title{\huge Performance Analysis of Multiparty GHZ-State Quantum Conference Key Agreement over Optical Fiber Channel}

\author{Dipanjan~Bera~\IEEEmembership{Student Member,~IEEE}, Ramniwas~Meena, and~Neel~Kanth~Kundu~\IEEEmembership{Member,~IEEE} \\
\thanks{This work was supported in part by the National Quantum Mission of India, INSPIRE Faculty Fellowship (Reg. No.: IFA22-ENG 344), ANRF Prime Minister Early Career Research Grant (ANRF/ECRG/2024/000324/ENS), and IIT Delhi's New Faculty Seed Grant. }

\thanks{D. Bera and R. Meena are with the Centre for Applied Research in Electronics (CARE), Indian Institute of Technology Delhi, New Delhi 110016, India (e-mails: crz258444@care.iitd.ac.in,  ram.21phy@gmail.com.)}

\thanks{N. K. Kundu is with CARE and the Bharti School of Telecommunication Technology and Management, Indian Institute of Technology Delhi, New Delhi-110016, India (e-mail: neelkanth@iitd.ac.in). He is also an honorary fellow at the Department of Electrical and Electronic Engineering, University of Melbourne, Melbourne, VIC, Australia. \\
This work has been submitted to the IEEE for possible publication. Copyright may be transferred without notice, after which this version may no longer be accessible.}
}

\maketitle



\begin{abstract}

Quantum conference key agreement (QCKA) enables multiple spatially
separated users to establish a common key with composable
information-theoretic security by exploiting multipartite quantum
correlations. Greenberger--Horne--Zeilinger (GHZ) states are a
natural resource for this task, but their distribution over optical
fiber is degraded by transmission loss, polarization decoherence, and
detector imperfections. We investigate the finite-key performance of
an $N$-party GHZ-based QCKA system using the $N$-BB84 protocol, where each optical link is modeled by extending the bipartite
erasure--Pauli channel to incorporate detector dark counts. We characterize the impact of this quantum channel on the shared GHZ state and derive closed-form expressions for the error rates. The analytical results reveal that the local bit-error rate depends only on transmissivity and dark counts, while the global phase-error rate
compounds over all $N-1$ links and therefore degrades faster with
both distance and the number of parties.
These error rates are integrated into a
finite-key secret-key-rate (SKR) framework, in which we optimize the
parameter-estimation probability via Brent's derivative-free method.
Numerical results show that increasing fiber
distance and network size reduce the achievable SKR. Furthermore, the
optimized parameter-estimation probability consistently increases the
SKR and extends the maximum secure distance relative to a
fixed-allocation baseline, with comparison to the asymptotic limit
quantifying the residual finite-key penalty.
\end{abstract}

\begin{IEEEkeywords}
Quantum conference key agreement, multipartite quantum key distribution, GHZ state, finite-key analysis, optical fiber, erasure--Pauli channel, polarization mode dispersion, secret key rate optimization.
\end{IEEEkeywords}



\section{Introduction}
\label{sec:introduction}
The security of modern communication systems relies predominantly on computational hardness assumptions, which are known to be vulnerable to attacks by sufficiently large-scale quantum computers. This vulnerability motivates continued interest in cryptographic primitives whose security does not depend on unproven computational assumptions. Quantum key distribution (QKD) addresses this need by allowing two distant parties to establish a secret key whose security follows from the laws of quantum mechanics rather than from computational hardness. Since the original BB84 protocol was proposed, QKD has evolved from a point-to-point cryptographic primitive into a foundational building block of emerging quantum network architectures.

Conventional QKD protocols, including BB84 and its many variants, are inherently bipartite: they establish a shared secret key between exactly two legitimate parties. Many realistic applications, however, involve a group of geographically separated participants who require a single common secret key rather than a collection of independent pairwise keys, representative examples include secure teleconferencing, distributed control of critical infrastructure, and collaborative multi-user quantum networks. Quantum conference key agreement (QCKA), also referred to as multipartite QKD, extends the two-party QKD task to this setting: it allows $N$ mutually trusted users to establish a single shared secret key, exploiting genuinely multipartite quantum correlations, while ensuring that no unauthorized party can obtain useful information about the resulting key~\cite{murta2020quantum}.

Existing approaches to QCKA can be broadly grouped into entanglement-based and prepare-and-measure schemes. Among entanglement-based approaches, the Greenberger--Horne--Zeilinger (GHZ) state has emerged as a natural multipartite resource, since measuring every party's share of a shared GHZ state in the computational basis yields perfectly correlated, uniformly random outcomes across all $N$ parties allowing a single distributed state to directly seed an $N$-party key, rather than $N-1$ separate bipartite keys that would need to be combined. Building on the early observation of Epping~\emph{et al.} that multipartite entanglement can accelerate key distribution in networks with bottlenecks \cite{epping2017}, this line of work has developed in several directions: toward anonymity-preserving protocols, in which parties can establish a conference key without revealing their identities to one another~\cite{hahn2020anonymous,grasselli2022secureanonymous}; toward protocol variants aimed at practical deployment, including repeater-like, measurement-device-independent constructions that relax timing requirements~\cite{lu2025repeaterlike}, and schemes based on cluster states rather than GHZ states~\cite{sun2016cluster}; and toward experimental demonstrations of GHZ-based QCKA over metropolitan-scale fiber and in deployed quantum networks \cite{proietti2021,liu2023,pickston2023}. Within the entanglement-based family, the $N$-BB84 protocol is a direct $N$-party generalization of BB84: a designated sender distributes an $N$-qubit GHZ state to itself and $N-1$ receivers, each party independently selects a measurement basis, the parties publicly reconcile their basis choices and estimate the relevant error rates from a random test sample, and the remaining measurement outcomes are processed via classical error correction and privacy amplification to yield a common secret conference key. We emphasize that $N$-BB84 itself is an established protocol; the present work is concerned with its behavior under a physically realistic channel model rather than with proposing a new protocol.

The performance of any QCKA scheme is fundamentally governed by the quantum channel over which the entangled state is distributed.
Fiber-based transmission introduces several distinct physical impairments, such as photon loss due to fiber attenuation, decoherence due to polarization-mode dispersion (PMD), and spurious detection events due to detector dark counts, each of which affects the shared quantum state differently and consequently degrades the achievable quantum bit error rate (QBER), the inter-party correlations, and the resulting secret-key rate (SKR). This sensitivity to channel imperfections is expected to be more pronounced in the multiparty setting than in bipartite QKD, since a GHZ state must be distributed intact across $N-1$ independent fiber links, and since the residual multipartite coherence of the state generally depends on the combined effect of all $N-1$ links. Therefore,  an increasing number of receivers is expected to impose progressively tighter constraints on the tolerable channel noise and the achievable secure transmission distance.

Despite this, the existing QCKA literature has largely treated channel noise as an abstract, externally specified parameter rather than deriving it from an underlying physical transmission model. The anonymity-preserving protocols of \cite{hahn2020anonymous,grasselli2022secureanonymous} are analyzed under an ideal or generically noisy GHZ state, without characterizing how much noise a real fiber link of a given length would actually introduce. The protocol-oriented works of \cite{lu2025repeaterlike,sun2016cluster} quantify loss through a single transmission efficiency, without connecting it to physical fiber attenuation, polarization decoherence, or detector imperfections. The experimental demonstrations of \cite{proietti2021,liu2023,pickston2023} report the QBER observed in a specific experimental setup, but do not provide a general expression predicting the QBER at other distances or for a different number of parties. Even the finite-key analysis of Li~\emph{et al.}~\cite{li2021asymmetric}, which allows each Bob's channel to differ, treats each error rate as an independently specified parameter: it does not distinguish the \emph{local} bit error between Alice and an individual Bob from the \emph{global} phase error associated with the full GHZ state, nor does it derive either quantity from the physics of the transmission channel. 

A complementary body of work has instead modeled the physical transmission channel directly, but predominantly for the bipartite case: detector dark counts have been modeled as randomizing a fraction of otherwise-lost detection events \cite{shibata2014,korzh2015,seshadreesan2016}, and PMD has been established as the dominant polarization-decoherence mechanism in long single-mode fiber~\cite{poole1986,wai1996,Agrawal2012}. These two effects were unified only in a recent work by Pirandola~\cite{pirandola2026}, which introduced a physically motivated erasure--Pauli channel for polarization-qubit distribution, in which photon loss is represented as an erasure event and residual polarization noise as a Pauli channel, with detector dark counts converting a fraction of erasures into spurious, randomizing clicks. This model directly connects measurable fiber parameters to the effective single-qubit channel, but it was formulated and validated for a single bipartite link, and does not by itself characterize how the genuinely multipartite coherence of a GHZ state degrades when propagated simultaneously across $N-1$ such links. Consequently, a physically grounded, closed-form characterization of QCKA performance that separates local and global error contributions and quantifies their dependence on both distance and the number of participating users remains absent from the current literature.

In this work, we develop a physically grounded finite-key framework for $N$-BB84 based QCKA protocol over
dark-count-modified erasure--Pauli fiber channels and optimize the parameter-estimation
probability to characterize the SKR and secure transmission distance. The main contributions of this work are summarized as follows:
\begin{itemize}
    \item We propose an $N$-party QCKA system based on the $N$-BB84 protocol operating over $N-1$ independent dark-count-modified erasure--Pauli fiber channels, explicitly propagating the shared GHZ state through this physical channel model.
    \item We derive closed-form, channel-parameter-dependent expressions for the local $Z$-basis bit-error rate between Alice and each Bob and for the global $X$-basis phase-error rate associated with the multipartite GHZ coherence, and show that the two are governed by structurally distinct physical mechanisms.
    \item We integrate these physically derived error rates into the composable finite-key secret-key-length framework for $N$-BB84, obtaining finite-key and asymptotic SKR expressions in terms of fiber attenuation, PMD coefficient, photon bandwidth, and dark-count probability.
    \item We formulate the allocation of the parameter-estimation probability as a one-dimensional optimization problem and solve it via Brent's derivative-free method, quantifying the resulting improvement in SKR and maximum secure distance over a fixed-allocation baseline.
    \item We numerically evaluate the resulting QBER and SKR for a varying number of participating users over a range of fiber distances, characterizing how the achievable secure transmission distance reduces as the number of QCKA
    participants increases.
\end{itemize}


The remainder of this paper is organized as follows. Section~\ref{sec:system}
presents the system model and channel framework, while
Section~\ref{sec:Multipartite_BB84_Protocol} describes the multipartite
$N$-BB84 protocol and key-rate formulation. Section~\ref{sec:math_analysis}
derives the channel-specific QBER expressions, and
Section~\ref{sec:optimization} presents the finite-key SKR optimization
using Brent's method. Section~\ref{sec:simulation_results} discusses the
numerical results, and Section~\ref{sec:conclusion} concludes the paper
and discusses future research directions.



\begin{figure*}[!t]
    \centering
    \includegraphics[width=0.9\textwidth]{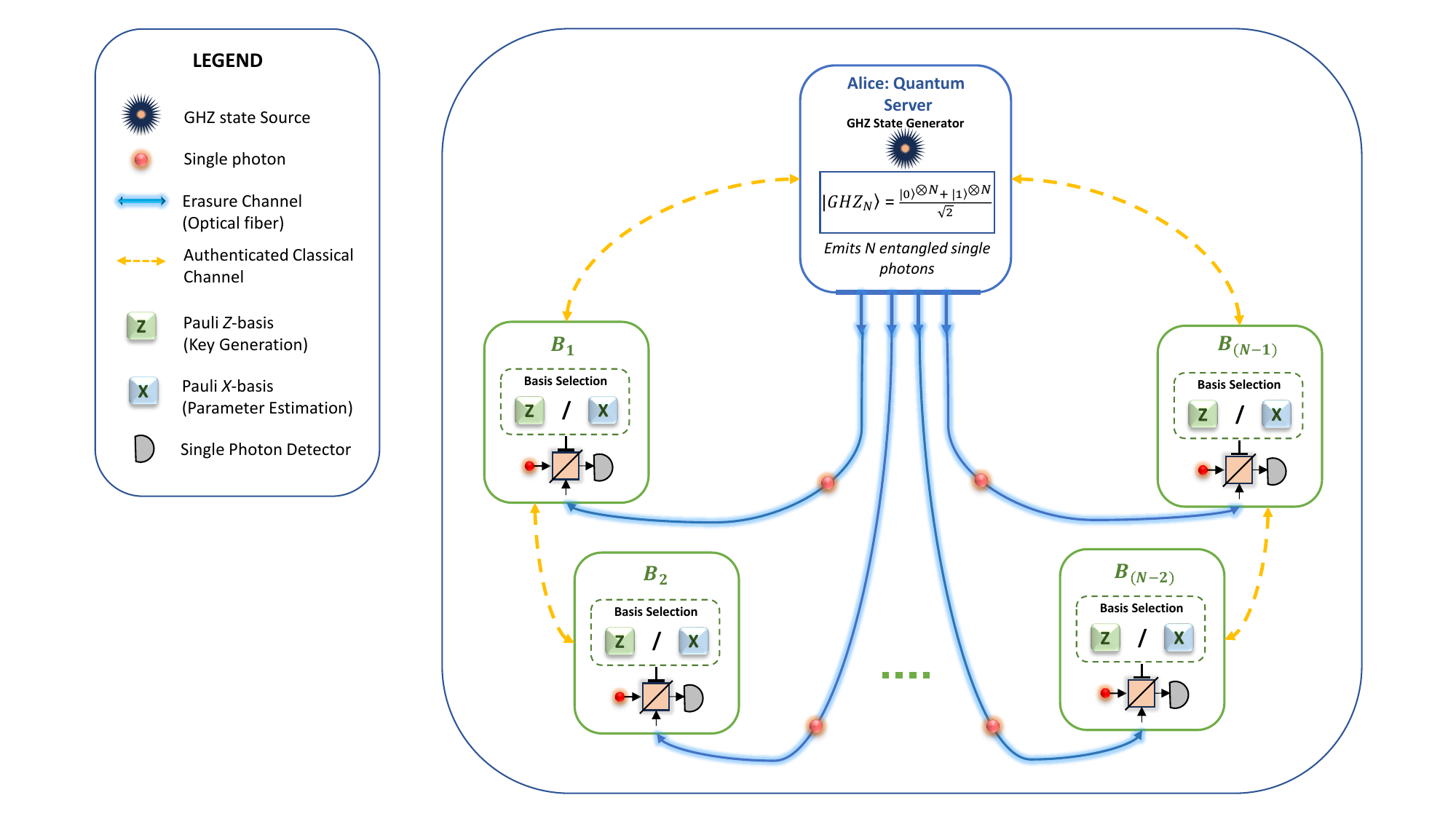}
    \caption{System model of the proposed $N$-party GHZ-state-based quantum conference key agreement system.}
    \label{fig:system_model}
\end{figure*}


\section{System Model}
\label{sec:system}

As depicted in Fig.~\ref{fig:system_model}, we consider an $N$-party QCKA system in which an $N$-qubit GHZ state is distributed from Alice to $N-1$ receivers named as Bob$_i$ through independent optical-fiber links.  During transmission, the distributed qubits experience fiber loss, polarization noise, while detector dark counts are introduced at the
receiver detector, resulting in spurious detection events in the
absence of a received photon. The dark-count effect is
therefore incorporated separately into the channel model as a
dark-count-modified erasure--Pauli channel.
The resulting noisy GHZ state is then used in the $N$-BB84 protocol, where $Z$-basis measurements generate the raw conference key and $X$-basis measurements are used for parameter estimation. The following subsections describe the GHZ state, the physical fiber channel, and the protocol in detail.

\subsection{GHZ State}
\label{Ghz state}
Multipartite QKD enables a single sender, Alice $(A)$ to establish correlated secret keys with multiple Bobs $(\{B_i\}_{i=1}^{N-1})$ via shared entanglement. The $N$-qubit GHZ state vector is given by
\begin{equation}
|\mathrm{GHZ}_N\rangle = \frac{1}{\sqrt{2}}\left(|0\rangle^{\otimes N} + |1\rangle^{\otimes N}\right)\;,
\label{eq:ghz}
\end{equation}
and the resulting density matrix can be expressed as
\begin{equation}
\rho_{\mathrm{GHZ}} = |\mathrm{GHZ}_N\rangle\langle\mathrm{GHZ}_N| = \frac{1}{2}\sum_{k=1}^{4} T_k,
\label{eq:ghzrho}
\end{equation}
where
\begin{align}
T_1 &= \ketbra{0}{0}_A \otimes \ketbra{0}{0}_{B_1}\otimes\cdots\otimes\ketbra{0}{0}_{B_{N-1}},\\
T_2 &= \ketbra{0}{1}_A \otimes \ketbra{0}{1}_{B_1}\otimes\cdots\otimes\ketbra{0}{1}_{B_{N-1}},\\
T_3 &= \ketbra{1}{0}_A \otimes \ketbra{1}{0}_{B_1}\otimes\cdots\otimes\ketbra{1}{0}_{B_{N-1}},\\
T_4 &= \ketbra{1}{1}_A \otimes \ketbra{1}{1}_{B_1}\otimes\cdots\otimes\ketbra{1}{1}_{B_{N-1}}.
\end{align}
The diagonal terms $T_1, T_4$ govern the $Z$-basis bit-error statistics, while the
off-diagonal terms $T_2, T_3$ carry the $N$-partite coherence that determines the
global phase-error rate $Q_X$.

\subsection{Dark-Count-Modified Erasure--Pauli Channel}
\label{ssec:erasure_gen}

We model the loss and detector noise in fiber-based distribution of
polarization-entangled qubits using the erasure--Pauli channel
introduced in~\cite{pirandola2026}. For a single link with
transmissivity $\eta$, the channel acts on an input polarization qubit
$\rho$ as
\begin{equation}
\mathcal{E}_{\mathrm{EP}}(\rho)=(1-\eta)\ketbra{e}{e}+\eta\,\mathcal{P}(\rho),
\label{eq:ep_basic}
\end{equation}
where $\ket{e}$ is an orthogonal vacuum state flagging photon loss, and
$\mathcal{P}$ is a qubit Pauli channel with Kraus representation~\cite{Nielsen_Chuang_2010}
\begin{equation}
\mathcal{P}(\rho)=\sum_{k=0}^{3}p_k\,\sigma_k\rho\sigma_k^\dagger,\label{eq:pauli_channel_def}
\end{equation}
where $\mathbf{p}:=\{p_k\}$ denotes a probability distribution over the Pauli
errors $\sigma_k\in\{I,X,Y,Z\}$, with $I$ denoting the $2\times2$ identity operator.

Applying this channel to polarization-entanglement distribution over standard
optical fiber where PMD is the dominant
noise source~\cite{poole1986,wai1996,Agrawal2012},
gives rise to two different regimes: a depolarizing-dominated regime restricted
to short distances and a dephasing-dominated regime, enabled by active
polarization control, that permits much longer
transmission distances ~\cite{Massar_2007}. We consider the latter case throughout this work, in
which the Pauli channel reduces to a pure, isotropic dephasing in the
$Z$ basis, i.e., $\{p_k\}$ takes the restricted form
\begin{equation}
p_0 = 1-p_Z,\quad p_1=p_2=0,\quad p_3=p_Z,
\label{eq:dephasing_probs}
\end{equation}
so that $\mathcal{P}$ reduces to
\begin{equation}
\mathcal{P}(\rho)=(1-p_Z)\rho+p_Z Z\rho Z,
\end{equation}
with dephasing probability
\begin{equation}
p_Z(d)=\tfrac12\big(1-e^{-d/L_{\mathrm{DH}}}\big)
\label{eq:pz}
\end{equation}
where $d$ denotes the fiber length
and $L_{\mathrm{DH}}$ denotes the decoherence length given by
\begin{equation}
L_{\mathrm{DH}}=2\tau^2/D_{\mathrm{PMD}}^2,
\label{eq:LDH}
\end{equation}
where $\tau \simeq (2\pi\Delta\nu)^{-1}$ is the photon coherence time with $\Delta\nu$ being the bandwidth, and $D_{\mathrm{PMD}}$ is the PMD coefficient in ps/$\sqrt{\mathrm{km}}$.  In telecom fibers $D_{\mathrm{PMD}} \sim 0.01$--$10$ ps/$\sqrt{\mathrm{km}}$ ~\cite{Agrawal2012}. So, for a bandwidth $\Delta\nu = 100$ GHz ($\tau \simeq 1.6$ ps), the decoherence length is in the range of $L_{\mathrm{DH}} \sim 0.05$--$5\times10^{4}$ km. 


To capture realistic detection imperfections, we further incorporate
detector dark counts where even a no-photon (erasure) event can trigger a
spurious detector click, recorded as a random polarization outcome.
This effectively converts a fraction of the erasure events into
depolarizing noise, raising the overall Pauli error probability from
the bare distribution $\{p_k\}$ of Eq.~\eqref{eq:pauli_channel_def} to
an effective distribution $\{p_k'\}$ over the same Pauli operators
$\sigma_k$, yielding the effective channel $\mathcal{P}'$, of the same
form as $\mathcal{P}$ in Eq.~\eqref{eq:pauli_channel_def} with $p_k$
replaced by $p_k'$. This induces the dark-count-modified channel \cite{pirandola2026}
\begin{equation}
\mathcal{E}_{EP}^{dc}(\rho) = (1-\eta')\ketbra{e}{e} + \eta' \mathcal{P}'(\rho),
\label{eq:eeq_gen}
\end{equation}
where the effective transmissivity is
\begin{equation}
\eta' = \eta+(1-\eta)p_{\mathrm{dc}} = 1-(1-\eta)(1-p_{dc}), \label{eq:etap}
\end{equation}
where $p_{dc}$ denotes the dark count probability and the effective Pauli distribution $\{p_k'\}$ admits
\begin{align}
p'_k &= r\,p_k + \frac{1-r}{4}, \label{eq:p_prime}\\
r &= \frac{\eta}{\eta'}, \label{eq:r}
\end{align}
with single-link transmissivity $\eta=10^{-\alpha d/10}$, where $d$ is
the fiber distance in km and $\alpha$ is the attenuation coefficient in dB/km.

\subsection{Measurement and Basis Selection}
\label{subsec:measurement}

To describe the measurement procedure in a general form, consider first
an arbitrary single-qubit state $\rho$. A projective measurement in the
$Z$ basis is described by the POVM elements
\begin{equation}
\Pi^0_Z = \ketbra{0}{0}, \qquad \Pi^1_Z = \ketbra{1}{1},
\label{eq:meas_z_povm}
\end{equation}
while an $X$-basis measurement is described by
\begin{equation}
\Pi^0_X = \ketbra{+}{+}, \quad \Pi^1_X = \ketbra{-}{-},
\quad \ket{\pm} = \frac{\ket{0}\pm\ket{1}}{\sqrt2}.
\label{eq:meas_x_povm}
\end{equation}

For an $N$-party state $\rho$ shared among Alice and the $N-1$ Bobs,
measurements are performed jointly by applying the corresponding
single-qubit POVM element to each party's subsystem. When all $N$
parties select the same basis $M$, the outcome is a bit string
$\bm{a}=(a_A,a_{B_1},\dots,a_{B_{N-1}})\in\{0,1\}^N$, obtained with joint
probability\cite{Gianfranco}
\begin{equation}
P(\bm{a}\mid M) = \Tr\!\big[\Pi^{\bm{a}}_M\,\rho\big],
\label{eq:meas_born_joint}
\end{equation}
where 
\begin{equation}
    \Pi^{\bm{a}}_M = \Pi^{a_A}_M \otimes \Pi^{a_{B_1}}_M \otimes \cdots \otimes \Pi^{a_{B_{N-1}}}_M
\end{equation}
and the corresponding post-measurement state is
\begin{equation}
\rho^{(M)}_{\bm{a}} = \frac{\Pi^{\bm{a}}_M\,\rho\,\Pi^{\bm{a}}_M}{P(\bm{a}\mid M)}.
\label{eq:meas_postmeasurement_joint}
\end{equation}
In particular, a $Z^{\otimes N}$ measurement corresponds to $M=Z$ in
\eqref{eq:meas_born_joint}, and an $X^{\otimes N}$ measurement
corresponds to $M=X$; these are the two joint measurements used
throughout this work, for key generation and parameter estimation,
respectively.

Two correlators built from these joint measurements will be used
for the QBER analysis as described next. The first is the two-party $Z$-basis
correlator between Alice and an individual Bob $B_i$, obtained by
marginalizing \eqref{eq:meas_born_joint} (with $M=Z$) over all parties
other than $A$ and $B_i$:
\begin{align}
\langle Z_A Z_{B_i}\rangle
&= \sum_{a_A,a_{B_i}} (-1)^{a_A\oplus a_{B_i}}\, P(a_A,a_{B_i}\mid Z,Z)\nonumber\\
&= \Tr\!\big[\rho_{AB_i}(Z\otimes Z)\big],
\label{eq:zz_correlator_def}
\end{align}
where $\rho_{AB_i}$ is the corresponding bipartite reduced state. The
second is the full $N$-party $X$-basis correlator,
\begin{align}
\langle X^{\otimes N}\rangle
&= \sum_{\bm{a}} (-1)^{\,a_A\oplus a_{B_1}\oplus\cdots\oplus a_{B_{N-1}}}\, P(\bm{a}\mid X)\nonumber\\
&= \Tr\!\big[\rho\, X^{\otimes N}\big],
\label{eq:xn_correlator_def}
\end{align}
which depends on the full $N$-party outcome parity and
therefore on the complete state $\rho$. Both correlators take values in
$[-1,1]$, with $\langle Z_A Z_{B_i}\rangle=\langle X^{\otimes N}\rangle=1$
in the ideal, noiseless case and deviations from unity are quantified by
the error rates $Q_{AB_i}$ and $Q_X$ introduced in the following
section.

\section{Multipartite BB84 QCKA Protocol}
\label{sec:Multipartite_BB84_Protocol}
QCKA extends QKD from two parties to $N>2$ parties sharing a single secret key, rather than $N-1$ pairwise keys, by replacing the bipartite entangled resource with a multipartite one (e.g., the $N$-qubit GHZ state) and adapting BB84's sifting-and-testing structure~\cite{murta2020quantum}. This yields the $N$-BB84 protocol, where Alice and $N-1$ Bobs measure their GHZ shares and combine local bit statistics with a global check to certify security~\cite{Grasselli_2018}. We describe the protocol below, then define the QBER and phase-error rate used for parameter estimation, and derive the secret key length in the finite-key and asymptotic regimes.

\subsection{$N$-BB84 Protocol}
\label{ssec:protocol}

The $N$-BB84 protocol is the direct $N$-partite generalization of the bipartite BB84 protocol, first proposed in~\cite{Grasselli_2018}. It involves a single sender, Alice ($A$), and $N-1$ receivers, the Bobs ($B_1,\dots,B_{N-1}$), who aim to establish a common secret conference key from a shared multipartite entangled state (e.g., a GHZ state) ~\cite{murta2020quantum}. The protocol consists of the following steps:

\begin{enumerate}
    \item \textit{Distribution.} In each of $L$ rounds, a source distributes one copy of the $N$-qubit GHZ state
    \[
    \ket{\mathrm{GHZ}_N} = \frac{1}{\sqrt2}\big(\ket{0}^{\otimes N} + \ket{1}^{\otimes N}\big)
    \]
    among the $N$ parties: one qubit to Alice and one qubit to each Bob $B_i$. Prior to this quantum distribution, the parties also share a private random sequence $t=(t_1,\dots,t_L)\in\{Z,X\}^L$, i.i.d. with $\Pr[t_k=X]=p$ and $\Pr[t_k=Z]=1-p$, generated e.g.\ by a trusted source or by pre-distributed correlated randomness. This sequence must remain unknown to any adversary until after round $k$'s qubits have been measured.

    \item \textit{Measurement:} In round $k$, \emph{every} party measures its qubit in the common basis $t_k$ dictated by the pre-shared sequence, rather than choosing bases independently at random. Since the basis is synchronized across all $N$ parties, every round is directly usable, there is no basis-mismatch waste, and the $L$ rounds split approximately into $m=Lp$ test rounds ($t_k=X$) and $L-m$ key-generation rounds ($t_k=Z$).

    \item \textit{Parameter Estimation:} The parties use the $m$ test rounds ($t_k=X$) and an additional, disjoint sample of $m$ bits drawn from the $Z$-basis rounds for each Bob $B_i$ to estimate the two error rates:
    \begin{itemize}
        \item \emph{Global phase-error rate $Q_X^m$:} on the $m$ $X$-basis rounds, the parties publicly announce their outcomes (the basis is already common knowledge among them) and check the $N$-partite parity, giving $Q_X^m$.
        \item \emph{Bipartite bit-error rate $Q_{AB_i}^m$:} from the $Z$-basis rounds, Alice and each Bob $B_i$ publicly reveal a random sample of $m$ of their bits and directly compare them, giving $Q_{AB_i}^m$.
    \end{itemize}
    Since the basis choice was pre-shared rather than independently randomized, no separate public basis-reconciliation step is needed and only the announced outcomes above are revealed. The $Z$-basis bits consumed for the $Q_{AB_i}^m$ sample are discarded from the raw key, along with the $X$-basis test rounds, leaving $n=L-2m$ rounds for key generation.

    \item \textit{Raw key:} The remaining $n=L-2m$ key-generation rounds provide the raw key material with Alice's string $A^n$ and each Bob's string $B_i^n$.

    \item \textit{Classical post-processing:} Alice performs one-way error correction (EC) with each Bob in turn, so that every $B_i^n$ is reconciled with $A^n$; all parties then apply privacy amplification (PA), using the same two-universal hash function on their corrected data, to distill a secret conference key of length $\ell<n$ shared by $A$ and all the Bobs.
\end{enumerate}


An important structural feature of $N$-BB84 (and of GHZ-based CKA more generally) is that, for $N\ge 3$, perfect correlation among all $N$ outcomes can only be certified in the $Z$ basis. Unlike the bipartite case, no single choice of local bases yields perfect correlation in a second, independent basis~\cite{murta2020quantum}. This is why two structurally different error rates are needed, i.e., the $N-1$ pairwise bit-error rates $Q_{AB_i}^m$ and a single collective phase-error rate $Q_X^m$, rather than the single QBER of standard bipartite BB84.

\subsection{QBER} 
\label{ssec:qber}
The security of $N$-BB84 is parametrized by two error rates, one certifying the raw key's \emph{correctness} and the other its \emph{secrecy}, both estimated from the $m$-round test sample described in the parameter estimation step. Following the finite-key framework of~\cite{Grasselli_2018}, the raw key is
extracted in the $Z$ basis. Its \emph{correctness} is certified by the
$Z$-basis pairwise QBER between Alice and each Bob, given by
\begin{equation}
Q_{AB_i}^m = \frac{1-\langle Z_A Z_{B_i}\rangle_m}{2},
\quad
\langle Z_A Z_{B_i}\rangle = \Tr\!\big[\rho_{AB_i}(Z\otimes Z)\big],
\label{eq:QABi_}
\end{equation}
where $Z_A Z_{B_i}$ is averaged over $m$ 1st-type rounds randomly chosen by Alice, and $\rho_{AB_i}$ is the reduced bipartite state shared by Alice and Bob $i$ (obtained by tracing out all other parties). It is to be noted that
$Q_{AB_i}$ is a purely \textit{local} quantity.

The \emph{secrecy} of the raw key is certified by the global $N$-party $X$-basis parity
error rate, given by
\begin{equation}
Q_X^m = \frac{1-\langle X^{\otimes N}\rangle_m}{2},
\quad
\langle X^{\otimes N}\rangle = \Tr\!\big[\rho_{AB}\, X^{\otimes N}\big],
\label{eq:QX_}
\end{equation}
where $X^{\otimes N}$ is averaged over the 2nd-type rounds, and $\rho_{AB}$ is the full $N$-party post-selected state shared by Alice and all the Bobs. Unlike $Q_{AB_i}, Q_X$ cannot be assigned to any single Bob, which is a collective property of the whole shared state. In the ideal
noiseless case, $Q_{AB_i}=Q_X=0$.

The frequencies $Q_{AB_i}^m$ and $Q_X^m$ observed in the finite-size test sample are used to bound the eavesdropper's information. The bit error rates $Q_{AB_i}$ enter the error-correction step (since Alice reconciles her string separately with each Bob), while the phase error rate $Q_X$ alone determines how much privacy amplification is needed to erase Eve's information about the joint key.
\subsection{Finite Size Secret Key Length}
\label{sec:finite-key-rate}
We recall Theorem 3 of~\cite{Grasselli_2018}, which gives a fully composable,
finite-size secret key length for $N$-BB84, obtained via the uncertainty
relation for smooth entropies~\cite{UncertaintyRelation}.
Let the $N$-BB84 protocol use the optimal one-way EC protocol, which is
$\varepsilon_{\mathrm{EC}}$-fully secure and $2(N-1)\varepsilon_{\mathrm{PE}}$-robust.
Then the secret key produced by two-universal hashing has length
\begin{align}
\mathrm{SKR} = {} &
\frac{n}{L}
\Bigg[
1 - h\Big(Q_X^m + 2\xi(\varepsilon_x,n,m)\Big)
\notag\\
&\qquad
- \max_i h\Big(Q_{AB_i}^m + 2\xi(\varepsilon_z,n,m)\Big)
\Bigg]
\notag\\
&-\frac{1}{L}\log_2
\frac{2(N-1)}{\varepsilon_{\mathrm{EC}}}
\notag\\
&-\frac{2}{L}\log_2
\frac{1-2(N-1)\varepsilon_{\mathrm{PE}}}
{2\varepsilon_{\mathrm{PA}}}
-\frac{h(p)}{L}.
\label{eq:theorem3}
\end{align}
and is $\varepsilon_{\mathrm{tot}}$-secure, with
\begin{equation}
\varepsilon_{\mathrm{tot}} = 2\varepsilon_{\mathrm{PE}} + \varepsilon_{\mathrm{EC}} + \varepsilon_{\mathrm{PA}}.
\end{equation}
Here, the parameter-estimation error term is
\begin{equation}
\varepsilon_{\mathrm{PE}} \equiv \sqrt{(N-1)\varepsilon_z + \varepsilon_x},
\label{eq:epsPE}
\end{equation}
and the finite-sampling correction is
\begin{equation}
\xi(\varepsilon, n, m) \equiv \sqrt{\frac{(n+m)(m+1)}{8nm^2}\ln\!\left(\frac{1}{\varepsilon}\right)}.
\label{eq:xi}
\end{equation}
Here $h(\cdot)$ denotes the binary entropy function, and the security
parameters $\varepsilon_x, \varepsilon_z, \varepsilon_{\mathrm{EC}},
\varepsilon_{\mathrm{PA}} > 0$ are free to choose subject to the fixed total
security budget $\varepsilon_{\mathrm{tot}}$. Each term in Eq.~\eqref{eq:theorem3} has a direct operational meaning:
\begin{itemize}
    \item The bracketed term is the extractable key rate after the privacy-amplification cost $h(Q_X^m + 2\xi(\varepsilon_x,n,m))$, set by the global phase-error rate $Q_X^m$ from Eq.~\eqref{eq:QX_}, and the error-correction cost $\max_i h(Q_{AB_i}^m + 2\xi(\varepsilon_z,n,m))$, set by the worst pairwise bit-error rate $Q_{AB_i}^m$ from Eq.~\eqref{eq:QABi_}, both evaluated on the $m$ test rounds.
    \item $n = L-2m$ is the number of key-generation rounds surviving after test rounds and sampling.
    \item The $\xi(\varepsilon,n,m)$ terms are finite-sampling corrections that bound the deviation between the observed test-round frequencies and the true error rates on the remaining $n$ rounds.
    \item The two $\log_2(\cdot)$ terms account for information leaked during error correction and privacy amplification, respectively.
    \item $h(p)$ is the fraction of the final key sacrificed to account for the pre-shared randomness used to mark the test (2nd-type) rounds.
\end{itemize}

\subsection{Asymptotic Key Rate}
\label{sec:Asymptotic-key-rate}

In the asymptotic limit $L\to\infty$, the finite-size corrections vanish
($\xi\to 0$, and the $\log_2(\cdot)$ and $h(p)$ terms become negligible relative
to $n$), and the key rate per signal reduces to
\begin{equation}
r_{BB84}(P_{AB}) = 1 - h(Q_X) - \max_i h(Q_{AB_i}),
\label{eq:asymptotic_rate}
\end{equation}
which recovers the multipartite BB84 rate derived in~\cite{murta2020quantum}. As in the finite-key case, the two terms retain their operational roles: $h(Q_X)$ is the privacy-amplification cost against a collective attack consistent with the observed global phase-error rate, while $\max_i h(Q_{AB_i})$ is the error-correction cost of reconciling with the noisiest Bob.

\section{QBER Analysis for Fiber Optic Channel}
\label{sec:math_analysis}

We now specialize the general $N$-BB84 security analysis of Sec.~\ref{sec:Multipartite_BB84_Protocol} to the physical channel model introduced in Sec.~\ref{sec:system}, i,.e., the dark-count--erasure-Pauli channel. We assume that this channel acts independently on each Bob's qubit, in the \emph{homogeneous} setting where all $N-1$ Bobs are equidistant from the source, i.e., Alice. Starting from the channel definition, the dark-count-modified transmissivity and the effective Pauli
probabilities given in Eqs.~\eqref{eq:etap}--\eqref{eq:p_prime}
and considering the dephasing-dominated regime described by
Eqs.~\eqref{eq:pz}--\eqref{eq:LDH}, we derive its action on the computational-basis operators, the resulting post-selected GHZ state, and its bipartite marginals, and integrate into the dark count enabled Pauli channel in Eqn.~\ref{eq:eeq_gen} with the
$N$-qubit GHZ state defined in Section~\ref{Ghz state}. The resulting output state before post-selection is given by
\begin{equation}
\rho_{\mathrm{out}}
=
\left(
\mathcal{I}_{A}
\otimes
\eop_{EP}^{dc\otimes(N-1)}
\right)
(\rho_{\mathrm{GHZ}}),
\label{eq:rho_out}
\end{equation}
where $\mathcal{I}_{A}$ denotes the identity operation on Alice's qubit
and $\mathcal{E}^{\mathrm{dc}}_{EP}$ is the dark-count-modified
erasure--Pauli channel defined in \eqref{eq:eeq_gen}.


\subsection{Channel Action on the GHZ State}
\label{sec:channel_action}

We now propagate the $N$-qubit GHZ density matrix $\rho_{\mathrm{GHZ}}$  from \eqref{eq:ghzrho} through the physical channel of Sec.~\ref{ssec:erasure_gen}. The effective Pauli channel of \eqref{eq:pauli_channel_def} acts
on a single-qubit operator $\rho$ as
\begin{equation}
\mathcal{P}'(\rho) = \sum_{k=0}^{3} p_k' \,\sigma_k \rho\, \sigma_k^{\dagger}.
\label{eq:Pprime-def}
\end{equation}
Using the standard Pauli-conjugation identities
$$X\ketbra00 X=Y\ketbra00 Y=\ketbra11,\; Z\ketbra00 Z=\ketbra00$$ (and the
analogous relations with $0\leftrightarrow1$ exchanged), together with
$$X\ketbra01 X=\ketbra10=-Y\ketbra01 Y,\; Z\ketbra01 Z=-\ketbra01,$$
direct evaluation of \eqref{eq:Pprime-def} on the four operators
$\ketbra00,\ketbra11,\ketbra01,\ketbra10$ spanning $\rho_{\mathrm{GHZ}}$ gives
\begin{align}
\mathcal{P}'(\ketbra00) &= a_0\ketbra00 + a_1\ketbra11,\nonumber \\
\mathcal{P}'(\ketbra11) &= a_1\ketbra00 + a_0\ketbra11, \nonumber\\
\mathcal{P}'(\ketbra01) &= b_0\,\ketbra01 + b_1\,\ketbra10,\nonumber \\
\mathcal{P}'(\ketbra10) &= b_0\,\ketbra10 + b_1\,\ketbra01,\nonumber \label{eq:P01-P10}
\end{align}
with $a_0=p_0'+p_3'$, $a_1=p_1'+p_2'$, $b_0=p_0'-p_3'$, $b_1=p_1'-p_2'$; the
normalization $\{p_k'\}$ ensures $a_0+a_1=1$. Specializing to the dephasing-dominated regime, from \eqref{eq:dephasing_probs}
$$p_0=1-p_Z,p_1=p_2=0, p_3=p_Z,$$ and substituting into ~\eqref{eq:p_prime} gives 
\begin{equation}
a_0=\frac{1+r}{2}, \hspace{0.5cm} a_1=\frac{1-r}{2},
\end{equation}
\begin{equation}
    b_0=r(1-2p_Z)=\frac{\eta}{\eta'}(1-2p_Z), \hspace{0.5cm} b_1=0.
\label{eq:a0a1cd-final}
\end{equation}
The vanishing of $b_1$ is the key structural fact for everything that
follows: under pure dephasing, $\ketbra01$ and $\ketbra10$ never mix
into one another, for any value of $\eta$, $p_{dc}$, or $p_Z$.

Including the erasure flag, the full single-qubit channel from~\eqref{eq:eeq_gen} therefore maps the four basis operators to
\begin{align}
\mathcal{E}_{EP}^{dc}(\ketbra00) &= (1-\eta')\ketbra ee + \eta'\sigma_0,\nonumber\\
\mathcal{E}_{EP}^{dc}(\ketbra11) &= (1-\eta')\ketbra ee + \eta'\sigma_1, \notag\\
\mathcal{E}_{EP}^{dc}(\ketbra01) &= (1-\eta')\ketbra ee + \eta' b_0\,\ketbra01,\nonumber\\
\mathcal{E}_{EP}^{dc}(\ketbra10) &= (1-\eta')\ketbra ee + \eta' b_0\,\ketbra10,
\label{eq:Lambda-all}
\end{align}
where $b_1$ has already removed the cross term, and
$$\sigma_0:=a_0\ketbra00+a_1\ketbra11, \sigma_1:=a_1\ketbra00+a_0\ketbra11$$
each have unit trace. Applying $\mathcal{I}_A\otimes(\mathcal{E}_{EP}^{dc})^{\otimes(N-1)}$
term-by-term to $T_1,\dots,T_4$ maps each onto Alice's untouched qubit
tensored with $(N-1)$ copies of the corresponding image in
\eqref{eq:Lambda-all}.

\subsection{Post-Selected Output State}
\label{sec:output_state}
Key generation is retained only for rounds in which all $N-1$ Bobs register a detection, i.e., no qubit is lost to the erasure flag $\ket e$. The single-qubit detection projector $\Pi_{\det}=\ketbra00+\ketbra11$ annihilates $\ketbra ee$ and acts as the identity on the remaining three operators in \eqref{eq:Lambda-all}, so post-selection simply discards the erasure term from each image. Applying $\Pi_{\det}^{\otimes(N-1)}(\cdot)\Pi_{\det}^{\otimes(N-1)}$ to the mapped $T_k$'s and summing with the $1/2$ prefactor gives the un-normalized, detection-conditioned operator
\small
\begin{align}
&\tilde\rho_{AB} = \frac{1}{2}\Big[
\ketbra00_A \otimes (\eta'\sigma_0)^{\otimes(N-1)}
+ \ketbra11_A \otimes (\eta'\sigma_1)^{\otimes(N-1)} \notag\\
&+ (\eta' b_0)^{N-1}\Big(\ketbra01_A \otimes \ketbra01_B^{\otimes(N-1)}
+ \ketbra10_A \otimes \ketbra10_B^{\otimes(N-1)}\Big)\Big].
\label{eq:rho-tilde}
\end{align}
\normalsize
Since $\Tr[\sigma_0]=\Tr[\sigma_1]=1$ while $\Tr[\ketbra01]=\Tr[\ketbra10]=0$, only the diagonal terms contribute to the trace, giving a joint detection probability of $P_{\mathrm{succ}}=(\eta')^{N-1}$. Crucially, the coherence (off-diagonal) term in \eqref{eq:rho-tilde} does \emph{not} vanish as an operator merely because it is traceless: it survives with coefficient $(\eta' b_0)^{N-1}$ and must be kept when normalizing, even though it contributes nothing to $P_{\mathrm{succ}}$ itself. Dividing \eqref{eq:rho-tilde} by $P_{\mathrm{succ}}=(\eta')^{N-1}$, the diagonal terms lose their $(\eta')^{N-1}$ prefactor exactly, while the coherence term reduces from $(\eta'b_0)^{N-1}$ to $b_0^{N-1}$, since $(\eta'b_0)^{N-1}/(\eta')^{N-1}=b_0^{N-1}$. This resolves what would otherwise look like an ambiguity between $(\eta'b_0)^{N-1}$ and $b_0^{N-1}$ as the physical coherence factor: the correct, normalized
post-selected state is
\small 
\begin{equation}
\rho_{AB}^{(\mathrm{ps})} = \frac{\tilde\rho_{AB}}{P_{\text{succ}}}
= \frac{1}{2}\Big[
|0\rangle\langle 0|_A \otimes \sigma_0^{\otimes(N-1)}
+ |1\rangle\langle 1|_A \otimes \sigma_1^{\otimes(N-1)}
\notag
\end{equation}
\begin{equation}
\hspace{2.2em}+ b_0^{N-1}\Big(|0\rangle\langle 1|_A \otimes |0\rangle\langle 1|_B^{\otimes(N-1)}
+ |1\rangle\langle 0|_A \otimes |1\rangle\langle 0|_B^{\otimes(N-1)}\Big)\Big],
\label{eq:rho-ps-final}
\end{equation}
\normalsize
with $b_0$ given by \eqref{eq:a0a1cd-final}. This is the state used
throughout the remainder of this section.

\subsection{Reduced Bipartite State and the Two Error Rates}
\label{sec:reduced_state}

Tracing $\rho_{AB}^{(\mathrm{ps})}$ over every Bob $B_j$, $j\ne i$,
leaves the diagonal terms unchanged, since $\Tr[\sigma_0]=\Tr[\sigma_1]=1$,
but kills the coherence term identically because $\Tr[\ketbra01]=\braket10=0$
and $N\ge3$ guarantees at least one Bob to trace over. Therefore, the off-diagonal
contribution in \eqref{eq:rho-ps-final} vanishes on any bipartite
marginal. The $N$-partite coherence is therefore a genuinely global
property that never survives in any pairwise reduced state consistent
with the structural remark of Sec.~\ref{ssec:protocol} that, for $N\ge3$,
correlation beyond the $Z$ basis cannot be certified from any single
pair of parties. The reduced state is thus
\begin{align}
\rho_{AB_i}& = \frac{1}{2}\left[\ketbra00_A\otimes\sigma_0 + \ketbra11_A\otimes\sigma_1\right]\nonumber\\
&= \frac{1}{2}
\begin{pmatrix}
\frac{1+r}{2} & 0 & 0 & 0\\[2pt]
0 & \frac{1-r}{2} & 0 & 0\\[2pt]
0 & 0 & \frac{1-r}{2} & 0\\[2pt]
0 & 0 & 0 & \frac{1+r}{2}
\end{pmatrix},
\label{eq:rho-ABi-matrix}
\end{align}
in the ordered basis 
$\{\ket{00},\ket{01},\ket{10},\ket{11}\}$, 
with
$\Tr[\rho_{AB_i}]=1$. \\

\begin{theorem}[Bit-error rate]\label{thm:QABi}
Let $N\ge 3$, and let $\rho_{\mathrm{GHZ}}$ be distributed through $N-1$ independent copies of $\mathcal{E}^{\mathrm{dc}}_{EP}$ in the dephasing-dominated regime, post-selected on all $N-1$ detections. The bit-error rate between Alice and any Bob $B_i$ admits
\begin{equation}
Q_{AB_i}=\frac{1-r}{2}=\frac{\eta'-\eta}{2\eta'}=\frac{(1-\eta)\,p_{dc}}{2\big[\eta+(1-\eta)p_{dc}\big]}\;.
\label{eq:QABi-final}
\end{equation}
Thus, $Q_{AB_i}$ depends only on the
fiber transmissivity $\eta$ and the dark-count probability
$p_{\mathrm{dc}}$, and is independent of $p_Z$ and $N$.
\end{theorem}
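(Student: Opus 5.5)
The plan is to compute $Q_{AB_i}$ directly from its definition in \eqref{eq:QABi_}, namely $Q_{AB_i}=(1-\langle Z_AZ_{B_i}\rangle)/2$, using the reduced bipartite state $\rho_{AB_i}$ already obtained in \eqref{eq:rho-ABi-matrix}. First I will justify that this marginal is the correct object. The post-selected state \eqref{eq:rho-ps-final} is normalized by $P_{\mathrm{succ}}=(\eta')^{N-1}$, and tracing over the $N-2\ge1$ other Bobs removes the coherence term because $\Tr[\ketbra01]=0$. It also leaves the diagonal blocks intact since $\Tr\sigma_0=\Tr\sigma_1=1$. This is exactly where $N\ge3$ enters. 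It also explains why neither $N$ nor $b_0$, and hence not $p_Z$, can survive in $\rho_{AB_i}$.

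Next I will evaluate the correlator. Since $Z\otimes Z$ is diagonal in the computational basis with entries $(1,-1,-1,1)$, the matrix form \eqref{eq:rho-ABi-matrix} gives
\[
\langle Z_AZ_{B_i}\rangle=\tfrac12\Big[2\cdot\tfrac{1+r}{2}-2\cdot\tfrac{1-r}{2}\Big]=r .
\]
Equivalently, one can write $\Tr[(\ketbra00\otimes\sigma_0)(Z\otimes Z)]=a_0-a_1=r$, and the $\ket1$ block gives the same value. This yields $Q_{AB_i}=(1-r)/2$.

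The remaining step is algebra. Substituting $r=\eta/\eta'$ from \eqref{eq:r} gives $(\eta'-\eta)/(2\eta')$. Then \eqref{eq:etap} gives $\eta'-\eta=(1-\eta)p_{dc}$ and $\eta'=\eta+(1-\eta)p_{dc}$, which produces the last expression in \eqref{eq:QABi-final}.

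The only point needing care is confirming the values $a_0=(1+r)/2$ and $a_1=(1-r)/2$. From \eqref{eq:p_prime} with the dephasing probabilities \eqref{eq:dephasing_probs}, we have $p_0'+p_3'=r(1-p_Z+p_Z)+\tfrac{1-r}{2}$. The $p_Z$ contributions cancel exactly because dephasing commutes with $Z$-basis projectors, so $p_Z$ drops out. The independence claims in the theorem then follow by inspection.
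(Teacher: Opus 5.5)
Your proposal is correct and follows essentially the same route as the paper: read off $\langle Z_A Z_{B_i}\rangle = r$ from the diagonal reduced state \eqref{eq:rho-ABi-matrix}, set $Q_{AB_i}=(1-r)/2$, and substitute $r=\eta/\eta'$ together with \eqref{eq:etap}; your re-derivations of $a_0,a_1$ and of the marginal reproduce steps the paper carries out just before the theorem. One minor aside: because $Z\otimes Z$ is diagonal, the coherence term would not contribute to the correlator even for $N=2$, so $N\ge3$ is needed only for the diagonal form of $\rho_{AB_i}$, not for the value of $Q_{AB_i}$.
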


\begin{IEEEproof}
From the reduced bipartite state in Eq.~\eqref{eq:rho-ABi-matrix},
the $Z$-basis correlation is

\begin{equation}
\begin{aligned}
\langle Z_A Z_{B_i}\rangle
&=
\operatorname{Tr}
\left[
\rho_{AB_i}(Z\otimes Z)
\right] \\
&=\tfrac12\Big[\tfrac{1+r}{2}-\tfrac{1-r}{2}-\tfrac{1-r}{2}+\tfrac{1+r}{2}\Big]\notag\\
&=\tfrac12\big[(1+r)-(1-r)\big]=r .
\end{aligned}
\end{equation}
where $Z\otimes Z=\ketbra{00}{00}-\ketbra{01}{01}-\ketbra{10}{10}+\ketbra{11}{11}$.
Now using the definition
$Q_{AB_i}$ from Eq.~\eqref{eq:QABi_} gives
\begin{equation}
Q_{AB_i}=\frac{1-r}{2}.
\end{equation}
Since $r$ from Eq.~\eqref{eq:r}, and with $\eta'$ from Eq.~\eqref{eq:etap}
it follows that
\begin{equation}
1-r=\frac{\eta'-\eta}{\eta'}=\frac{(1-\eta)p_{dc}}{\eta'},
\label{eq: 1-r}
\end{equation}
Substitution of \ref{eq: 1-r} yields
Eq.~\eqref{eq:QABi-final}. The result is independent of $p_Z$ because pure $Z$-basis
dephasing leaves the $Z$-basis populations unchanged. Hence,
the local bit errors arise solely from the randomization introduced
by dark counts in events associated with photon loss. \\
\end{IEEEproof}


\begin{theorem}[Global phase-error rate]
\label{thm:QX}
Under the assumptions of Theorem~\ref{thm:QABi}, the global $X$-basis phase-error rate is
\begin{equation}
Q_X
=
\frac{1}{2}
\left[
1-
\left(
\frac{\eta e^{-d/L_{\mathrm{DH}}}}
{\eta'}
\right)^{N-1}
\right],
\label{eq:QX-final}
\end{equation}
For $p_{\mathrm{dc}}\ll\eta$, (so $\eta'\approx\eta$) this reduces to
\begin{equation}
Q_X
\approx
\frac{1}{2}
\left[
1-e^{-(N-1)d/L_{\mathrm{DH}}}
\right].
\label{eq:QX-approx}
\end{equation}
\end{theorem}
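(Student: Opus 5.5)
We compute $\langle X^{\otimes N}\rangle=\Tr[\rho^{(\mathrm{ps})}_{AB}X^{\otimes N}]$ directly from the normalized post-selected state \eqref{eq:rho-ps-final}. We then substitute into the definition \eqref{eq:QX_}. The state splits into a diagonal part, $\tfrac12[\ketbra00_A\otimes\sigma_0^{\otimes(N-1)}+\ketbra11_A\otimes\sigma_1^{\otimes(N-1)}]$, and a coherence part proportional to $b_0^{N-1}$. We treat the two parts separately using linearity of the trace.

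\textbf{Diagonal part.} Each term is a product of operators diagonal in the $Z$ basis. On Alice's factor, $\Tr[\ketbra00 X]=\bra0 X\ket0=0$, and likewise for $\ketbra11$. So every product term has vanishing expectation of $X^{\otimes N}$, and the diagonal part contributes nothing.

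\textbf{Coherence part.} The trace factorizes across parties. Using $\Tr[\ketbra01 X]=\bra1X\ket0=1$ and $\Tr[\ketbra10 X]=\bra0X\ket1=1$, each of the two product terms contributes $1^N=1$. This gives
\begin{equation*}
\langle X^{\otimes N}\rangle=\tfrac12\,b_0^{N-1}(1+1)=b_0^{N-1}.
\end{equation*}
By \eqref{eq:QX_}, $Q_X=\tfrac12\big[1-b_0^{N-1}\big]$.

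\textbf{Closed form.} We substitute $b_0=\frac{\eta}{\eta'}(1-2p_Z)$ from \eqref{eq:a0a1cd-final}. From \eqref{eq:pz}, $1-2p_Z=e^{-d/L_{\mathrm{DH}}}$, so $b_0=\eta e^{-d/L_{\mathrm{DH}}}/\eta'$, and \eqref{eq:QX-final} follows.

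\textbf{Approximation.} By \eqref{eq:etap}, $\eta/\eta'=1-\frac{(1-\eta)p_{dc}}{\eta'}\to1$ when $p_{dc}\ll\eta$. Raising to the power $N-1$ gives $e^{-(N-1)d/L_{\mathrm{DH}}}$, which is \eqref{eq:QX-approx}. Strictly, this needs $(N-1)(1-\eta)p_{dc}/\eta\ll1$, which holds for fixed $N$ as $p_{dc}/\eta\to0$.

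The only delicate point is bookkeeping. We must use the normalized coefficient $b_0^{N-1}$ rather than $(\eta' b_0)^{N-1}$, which \eqref{eq:rho-ps-final} already resolves. We must also check that no cross terms survive, which holds because $b_1=0$. Beyond that the argument is a routine factorized trace, so no step is expected to be a real obstacle.
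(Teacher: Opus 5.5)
Your proposal is correct and follows the paper's proof essentially step by step. Only the coherence terms of the normalized post-selected state contribute, giving $\langle X^{\otimes N}\rangle=b_0^{N-1}$, and substituting $1-2p_Z=e^{-d/L_{\mathrm{DH}}}$ yields the closed form. Two of your details are slightly more careful than the paper's but do not change the route: you kill all diagonal terms via Alice's factor $\Tr[\ketbra00 X]=0$, and you note that the approximation really requires $(N-1)(1-\eta)p_{dc}/\eta\ll1$.
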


\begin{IEEEproof}
The global phase-error rate is determined by the
$N$-party correlator in Eq.~\eqref{eq:QX_}.
Since $X^{\otimes N}$ maps
$|0\rangle^{\otimes N}$ to $|1\rangle^{\otimes N}$ and vice versa,
the diagonal terms of the post-selected state
in Eq.~\eqref{eq:rho-ps-final} do not contribute to the correlator. The off-diagonal GHZ coherence is preserved through all $N-1$
links, with one coherence factor contributed by each link.
Moreover, tracing out any Bob removes this off-diagonal contribution;
hence, it appears only in the full $N$-party $X$-basis parity.
Therefore, only the off-diagonal GHZ coherence terms contribute to $\langle X^{\otimes N}\rangle$, via $\bra{0\cdots0}X^{\otimes N}\ket{1\cdots1}=\bra{1\cdots1}X^{\otimes N}\ket{0\cdots0}=1$, giving 
\begin{equation}
\langle X^{\otimes N}\rangle=\tfrac12b_0^{N-1}(1+1)=b_0^{N-1}=\left(\frac{\eta}{\eta'}(1-2p_Z)\right)^{N-1}
\label{eq:xn_erasure}
\end{equation}
where $b_0$ is defined in Eq.~\eqref{eq:a0a1cd-final}.
Therefore,
\begin{equation}
Q_X
=
\frac{1}{2}
\left[
1-
\left(
\frac{\eta}{\eta'}(1-2p_Z)
\right)^{N-1}
\right].
\end{equation}
For the dephasing model from Eq.~\eqref{eq:pz}, $(1-2p_Z)$ can be simplified as
\begin{align}
1 - 2p_Z = 1 - 2 \cdot \frac{1}{2}(1 - e^{-d/L_{DH}}) = e^{-d/L_{DH}} \label{eq:1_minus_2pz}
\end{align}
Therefore
Using Eq.~\eqref{eq:1_minus_2pz},
yields Eq.~\eqref{eq:QX-final}. When
$p_{\mathrm{dc}}\ll\eta$, we have $\eta'\approx\eta$, so $r\to1$ and $b_0\to1-2p_Z$, which
directly yields Eq.~\eqref{eq:QX-approx}.
\end{IEEEproof}
\vspace{0.5cm}
Table~\ref{tab:limiting_checks} verifies that Theorems~\ref{thm:QABi} and~\ref{thm:QX} behave as physically required in every extreme cases, confirming that dark counts affect only $Q_{AB_i}$ and enter $Q_X$ only through the ratio $\eta/\eta'$, which is negligible for $p_{dc}\ll\eta$, while dephasing PMD and the number of Bobs affect only $Q_X$.

\begin{table}[!htbp]
\centering
\caption{Limiting checks of the derived bit- and phase-error rates.}
\label{tab:limiting_checks}
\scriptsize
\renewcommand{\arraystretch}{1.15}
\setlength{\tabcolsep}{3pt}
\begin{tabular}{c|c|p{3.5cm}}
\hline
\textbf{Limit} & \textbf{Result} & \textbf{Physical interpretation} \\
\hline
$p_{dc}\to0$ & $Q_{AB_i}\to0$ & No dark counts; no bit errors. \\
$p_{dc}\to1$ & $Q_{AB_i}\to(1-\eta)/2$ & Lost photons yield fully random outcomes. \\
$\eta\to1$ & $Q_{AB_i}\to0$ & Ideal transmission. \\
$\eta\to0$ & $Q_{AB_i}\to1/2$ & Complete randomization. \\
$d\to0$ & $Q_X\to0$ & Ideal transmission ($\eta=\eta'=1$, $p_Z=0$). \\
$\eta\to0$ & $Q_X\to1/2$ & Complete loss destroys phase correlation. \\
$p_{dc}\to0$ & $Q_X\to\tfrac12\big[1-e^{-(N-1)d/L_{DH}}\big]$ & Only dephasing degrades the phase error. \\
$p_Z\to0$ & $Q_X=\tfrac12\big[1-(\eta/\eta')^{N-1}\big]$ & Only dark-count mixing remains; vanishes as $p_{dc}\to0$. \\
$N\to\infty$ & $Q_X\to1/2$ & GHZ coherence vanishes as each link contributes a further factor below unity. \\
\hline
\end{tabular}
\end{table}

\subsection{Secret Key Rate under the Physical Channel Model}
\label{sec:skr_substitution}


Using the channel-specific bit-error rate $Q_{AB_i}$ and global
phase-error rate $Q_X$ derived in Eqs.~\eqref{eq:QABi-final} and
\eqref{eq:QX-final}, respectively, the finite-key secret key rate is
obtained by substituting these physical error rates into the
general finite-key SKR expression defined in
Eq.~\eqref{eq:theorem3}. For the homogeneous-distance configuration,
all the Bobs experience identical channels, and hence
$\max_i h(Q_{AB_i})$ reduces to $h(Q_{AB_i})$. Similarly, the asymptotic secret key rate is obtained by
substituting the same channel-specific error rates into the
asymptotic SKR expression given in Eq.~\eqref{eq:asymptotic_rate}. Thus, the resulting finite-key and asymptotic rates are directly
determined by the physical fiber parameters, including the
attenuation coefficient $\alpha$, fiber distance $d$, PMD
coefficient $D_{\mathrm{PMD}}$, photon bandwidth $\Delta\nu$,
and dark-count probability $p_{\mathrm{dc}}$.

\section{Secret Key Rate Optimization}
\label{sec:optimization}
In this section, the parameter-estimation probability $p$ is optimized in order to maximize the finite-key SKR. A numerical optimization approach based on Brent's method is applied.
For a fixed number of parties $N$, fiber distance $d$, and total block
length $L$, the finite-key secret key length can be expressed as
\begin{equation}
\ell = \ell(p;N,d,L),
\label{eq:ell_general_p}
\end{equation}
where the dependence on $p$ arises through the parameter-estimation and
key-generation sample sizes given by $m = Lp$ and $n = L-2m = L(1-2p)$.

Among the four quantities $(N,d,L,p)$ that determine $\ell$, only $p$ is
optimized here because it is the only one that is a free \emph{protocol}
parameter rather than a fixed \emph{physical} or \emph{experimental}
setting. The number of parties $N$ and the fiber distance $d$ define the
network topology under study and are treated as external inputs whose
effect on the key rate we wish to characterize; the block length $L$ is
typically fixed by the experiment's duration or hardware budget. The
probability $p$, by contrast, is a design choice made by the parties
themselves at no physical cost, and it directly controls the trade-off
between the two competing resources in Eq.~\eqref{eq:ell_general_p}, where a
larger $p$ increases the test sample $m$ and tightens the statistical
estimate of the error rates $Q_X$ and $Q_{AB_i}$, but simultaneously
shrinks the key-generation sample $n$ from which the raw key is
extracted. Since this trade-off can shift the achievable SKR
substantially, choosing $p$ well is not a minor implementation detail
but a necessary step for reporting a meaningful, optimized key rate. The finite-key SKR is obtained by normalizing the secret key length with
respect to the total number of communication rounds:
\begin{equation}
R_{\mathrm{FK}}(p;N,d,L)
=
\frac{\ell(p;N,d,L)}{L}.
\label{eq:skr_finite_optimization}
\end{equation}

The optimization problem is therefore formulated as
\begin{equation}
\begin{aligned}
p^{\star}(N,d)
&=
\underset{p}{\operatorname{arg\,max}}
\quad
R_{\mathrm{FK}}(p;N,d,L)
\\
\mathrm{subject~to}
\quad&
p_{\min}
\leq p
\leq
p_{\max},
\\
&
m=Lp>0,
\\
&
n=L(1-2p)>0.
\end{aligned}
\label{eq:optimization_problem}
\end{equation}
Here, the notation $p^\star(N,d)$ makes explicit that this optimizing
choice of $p$ depends on the pair $(N,d)$, since both $Q_X$ and
$Q_{AB_i}$, and hence the shape of $R_{\mathrm{FK}}(p)$, depend on $N$
and $d$ through the channel model. Since $n>0$, the physical range of the optimization variable satisfies
\begin{equation}
0<p<\frac{1}{2}.
\label{eq:p_range}
\end{equation}

In the numerical implementation, a slightly restricted interval
\begin{equation}
p_{\min}
\leq
p
\leq
\frac{1}{2}-p_{\min}
\label{eq:numerical_p_range}
\end{equation}
is used instead, to avoid the numerical singularities that occur as
$p\to0$ or $p\to1/2$: the finite-sampling correction
$\xi(\varepsilon,n,m)$ diverges as $m\to0$, and $n\to0$ as $p\to1/2$, so
evaluating $\ell(p;N,d,L)$ exactly at either boundary is numerically
ill-conditioned. The value of $p_{\min}$ is chosen as the smallest
test-sample fraction for which $m=Lp_{\min}$ remains large enough for
$\xi(\varepsilon,n,m)$ to give a statistically meaningful bound; in
practice a small fixed fraction (e.g., $p_{\min}\sim10^{-4}$, or a
minimum test-sample count appropriate to the chosen $L$) is sufficient,
and the resulting optimum $p^\star$ is checked to lie in the interior of
$[p_{\min},\,1/2-p_{\min}]$ rather than at either boundary. For each pair $(N,d)$, the optimization is performed independently.
Thus, varying $N$ or $d$ changes the underlying error parameters $Q_X$
and $Q_{AB_i}$ and consequently modifies the optimal
parameter-estimation probability $p^\star$.

\subsection{SKR Maximization Using Brent's Method}
\label{subsec:brent_optimization}

The optimization problem in \eqref{eq:optimization_problem} is
one-dimensional, since $p$ is the only optimization variable. However,
the dependence of the finite-key secret key length on $p$ is nonlinear,
due to the simultaneous dependence of $m$, $n$, and the statistical
fluctuation terms on $p$:
\begin{equation}
p
\rightarrow
(m,n)
\rightarrow
(\xi_X,\xi_Z)
\rightarrow
(Q_X^{\mathrm{eff}},
Q_{AB}^{\mathrm{eff}})
\rightarrow
\ell.
\label{eq:p_dependency_chain}
\end{equation}
Consequently, a closed-form analytical solution for the optimal $p$ is
generally difficult to obtain directly, and Brent's method is instead
employed to numerically determine the value of $p$ that maximizes the
finite-key SKR.

The original optimization problem is
\begin{equation}
\max_{p}
\;
\ell(p;N,d,L).
\label{eq:max_problem_brent}
\end{equation}
Since standard numerical routines are commonly formulated as minimization
problems, \eqref{eq:max_problem_brent} is equivalently written as
\begin{equation}
\min_{p}
\;
f(p),
\qquad
f(p)
=
-\ell(p;N,d,L).
\label{eq:min_problem_brent}
\end{equation}
The optimum obtained using Brent's method is therefore
\begin{equation}
p^\star
=
\underset{p\in[p_{\min},p_{\max}]}
{\operatorname{arg\,min}}
\left[-\ell(p;N,d,L)\right],
\label{eq:p_opt_brent}
\end{equation}
with corresponding maximum secret key length
\begin{equation}
\ell_{\max}(N,d)
=
\ell(p^\star;N,d,L),
\label{eq:ell_max_brent}
\end{equation}
and optimized finite-key SKR
\begin{equation}
R_{\mathrm{FK}}^{\mathrm{opt}}(N,d)
=
\frac{\ell_{\max}(N,d)}{L}.
\label{eq:optimized_skr}
\end{equation}

Brent's method is well suited to this problem because it does not
require the analytical derivative of the objective function. Instead, it
combines two complementary numerical strategies: golden-section search
and parabolic interpolation \cite{brent2002algorithms}. The golden-section procedure provides a robust mechanism for reducing
the interval containing the optimum. Let the initial optimization
interval be
\begin{equation}
[x_0,y_0]
=
[p_{\min},p_{\max}].
\label{eq:initial_interval}
\end{equation}
At each iteration, the search interval is reduced according to the
relative objective values, so that the uncertainty interval containing
the optimum progressively shrinks,
\begin{equation}
[y_k-x_k]
\longrightarrow
0.
\label{eq:interval_reduction}
\end{equation}
In addition to interval reduction, Brent's method accelerates
convergence using parabolic interpolation. Given three previously
evaluated points,
\begin{equation}
(p_1,f_1),
\quad
(p_2,f_2),
\quad
(p_3,f_3),
\label{eq:three_points}
\end{equation}
a local quadratic approximation of the objective function is constructed,
\begin{equation}
f(p)
\approx
xp^2+yp+c,
\label{eq:parabolic_approximation}
\end{equation}
where $c$ is constant with estimated stationary point
\begin{equation}
p_{\mathrm{par}}
=
-\frac{y}{2x}.
\label{eq:parabolic_minimum}
\end{equation}
If this parabolic step is numerically reliable and remains within the
admissible search interval, it is used to generate the next candidate
point; otherwise, the algorithm reverts to a golden-section step. Brent's
method thus combines the robustness of interval-based search with the
faster local convergence provided by interpolation. The overall procedure
is summarized in Algorithm~\ref{alg:brent_skr}. 

\subsection{Computational Complexity }
\label{subsec:complexity}

For each fixed pair $(N,d)$, Brent optimization approach reduces to a
one-dimensional search over $p$, the complexity required to reach a numerical
tolerance $\delta$ is approximately
\begin{equation}
\mathcal{O}\left(
\log\frac{p_{\max}-p_{\min}}{\delta}
\right).
\label{eq:brent_complexity}
\end{equation}
%
evaluations of $\ell$ (parabolic steps only accelerate this on smooth regions). Since Brent's method optimizes $\ell(p)$ directly, without requiring
derivative evaluations, it is generally more robust for the present
nonlinear finite-key SKR optimization problem. 
\begin{algorithm}[!htbp]
\caption{Brent Optimization of the Finite-Key SKR}
\label{alg:brent_skr}

\begin{algorithmic}[1]
\footnotesize
\Require
$N$, distance set $\mathcal{D}$,
block length $L$, and bounds $[p_{\min},p_{\max}]$

\Ensure
Optimized probability $p^\star$ and SKR $R_{\mathrm{FK}}^{\mathrm{opt}}$

\For{each $N$}

    \For{each fiber distance $d\in\mathcal{D}$}

        \State Define $f(p)=-\ell(p,d,N)$

        \State Set $p_{\min}=10^{-4}$ and
        $p_{\max}=0.5-10^{-4}$

        \State Set $\ell=0$ if
        $Q_X^{\mathrm{eff}}\ge\tfrac12$,
        $Q_{AB_i}^{\mathrm{eff}}\ge\tfrac12$, or
        $1-2(N-1)\varepsilon_{\mathrm{PE}}\le0$;
        otherwise set $\ell\gets\max\{\ell,0\}$

        \State Apply bounded Brent optimization:
        \[
        p^\star=\arg\min_{p\in[p_{\min},p_{\max}]} f(p)
        \]

        \State Compute
        \[
        \ell_{\max}=\ell(p^\star,d,N)
        \]

        \State Compute the optimized SKR:
        \[
        R_{\mathrm{FK}}^{\mathrm{opt}}
        =
        \frac{\ell_{\max}}{L}
        \]

        \State Store $p^\star$ and
        $R_{\mathrm{FK}}^{\mathrm{opt}}$

    \EndFor

\EndFor

\State \Return $p^\star(N,d)$ and
$R_{\mathrm{FK}}^{\mathrm{opt}}(N,d)$

\end{algorithmic}
\end{algorithm}

\section{Numerical Results}
\label{sec:simulation_results}


This section presents the numerical evaluation of the proposed
multipartite GHZ-based quantum communication system under the
erasure-Pauli channel model introduced in Sec.~\ref{sec:system}. The
system's performance is characterized in terms of the QBER and the finite-key SKR, as functions of
the fiber distance and the number of participating users.
All  channel and error-rate quantities ($\eta$, $\eta'$,
$Q_{AB_i}$, $Q_X$) are computed as functions of $N$ and $d$ using the
closed-form expressions derived in Sec.~\ref{sec:math_analysis}, and the
resulting finite-key length $\ell$ and SKR $R_{\mathrm{FK}}=\ell/L$
follow from \eqref{eq:theorem3}. The complete set of numerical parameters used in the simulations is
summarized in Table~\ref{tab:sim_params}.

\begin{table}[!htbp]
\centering
\caption{Numerical parameters used in the simulations.}
\label{tab:sim_params}
\begin{tabular}{ll}
\hline
\textbf{Parameter} & \textbf{Value} \\
\hline
Number of parties, $N$ & $\{3,4,5,6\}$ \\
Total block length, $L$ & $10^{6}$ \\
Parameter-estimation probability, $p$ & $0.02$ \\
Minimum parameter-estimation probability, $p_{\min}$ & $10^{-4}$ \\
Maximum parameter-estimation probability, $p_{\max}$ & $0.5-10^{-4}$ \\
Stopping tolerance of the Brent search, $\delta$ & $10^{-9}$ \\
Security parameters, $\varepsilon_X=\varepsilon_Z=\varepsilon_{\mathrm{EC}}=\varepsilon_{\mathrm{PA}}$ & $10^{-10}$ \\
Fiber attenuation coefficient, $\alpha$ & $0.2$~dB/km \\
Dark-count probability, $p_{\mathrm{dc}}$ & $10^{-3}$ \\
Photon bandwidth, $\Delta\nu$ & $100$~GHz \\
PMD coefficient, $D_{\mathrm{PMD}}$ & $0.1$~ps/$\sqrt{\mathrm{km}}$ \\
\hline
\end{tabular}
\end{table}

Fig.~\ref{fig:qber_distance} plots $Q_{AB_i}$ and $Q_X$ against fiber
distance for each $N$. The two curves confirm, numerically, the
structural distinction predicted analytically in
Sec.~\ref{sec:math_analysis}: $Q_{AB_i}$, given by~\eqref{eq:QABi-final},
depends only on the loss-to-dark-count ratio and is therefore
independent of $N$, saturating toward $1/2$ purely as a function
of distance. Furthermore, $Q_X$, given by~\eqref{eq:QX-final}, instead
carries the coherence factor $b_0^{N-1}$ raised to the $(N-1)$-th power,
so its approach to $1/2$ accelerates visibly as $N$ increases. This is
not merely a numerical observation but a direct consequence of the
$(N-1)$-fold independent-channel structure derived in
Sec.~\ref{sec:math_analysis}. Unlike bipartite QKD, where a single QBER
saturates at a rate set by one link, the global $N$-partite coherence
of a GHZ state decays as the product of $N-1$ per-link survival
factors, each strictly less than unity. Physically, this means that
dark counts and dephasing are not interchangeable 'noise' in this
protocol; they act on different observables and combine differently
with party number, a distinction that is invisible to prior QCKA
analyses that assign $Q_{AB_i}$ and $Q_X$ as independent free
parameters rather than deriving them from a common physical channel.

\begin{figure}[!htbp]
    \centering
    \includegraphics[width=0.9\linewidth]{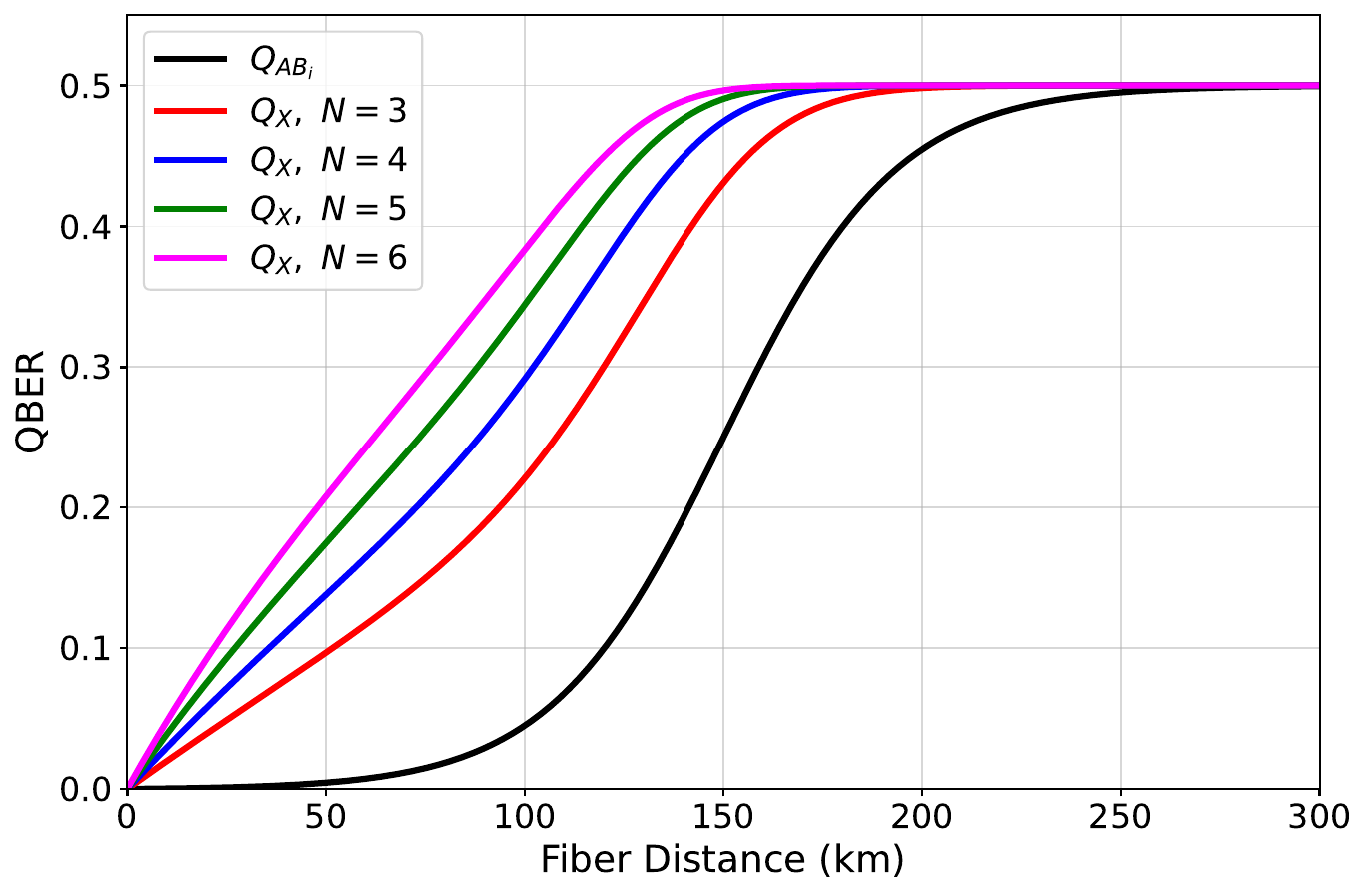}
    \caption{Variation of the $Z$-basis and $X$-basis QBERs with fiber distance for different numbers of parties $N$ under the erasure-Pauli channel.}
    \label{fig:qber_distance}
\end{figure}

Fig.~\ref{fig:comparison_optimization} compares the finite-key SKR obtained
with the fixed parameter-estimation probability $p=0.02$ against the
Brent-optimized probability $p^\star(N,d)$ obtained by solving
Eq.~\eqref{eq:optimization_problem}. For every $N$, optimization over
$p$ extends the maximum secure distance and increases the SKR
relative to the fixed-$p$ baseline. 
Furthermore, Table~\ref{tab:skr_summary} summarizes the SKR range for all
three scenarios: fixed $p$, optimized $p^\star$, and the asymptotic
case,  along with the resulting gain in SKR range by the
optimization.
Notably, the performance gain grows with $N$ because larger conferences operate closer to the noise threshold identified in
Fig.~\ref{fig:comparison_optimization}, they are more sensitive to the
statistical-fluctuation penalty $\xi(\varepsilon,n,m)$ in
Eq.~\eqref{eq:xi}, and therefore benefit disproportionately from a
test-sample allocation tuned to the actual channel conditions rather
than a one-size-fits-all convention. This result demonstrates that principled optimization of the parameter-estimation probability, rather than a fixed, conventionally chosen value, yields a practically significant and $N$-dependent improvement in both key rate and secure range for multipartite QCKA over a realistic channel. Since this gain is achieved purely through protocol-level tuning, with no changes to the fiber, source, or detector hardware, it represents a 'free' performance improvement directly relevant to real-world deployment in multipartite QKD networks.

\begin{figure}[!htbp]
    \centering
    \includegraphics[width=0.9\linewidth]{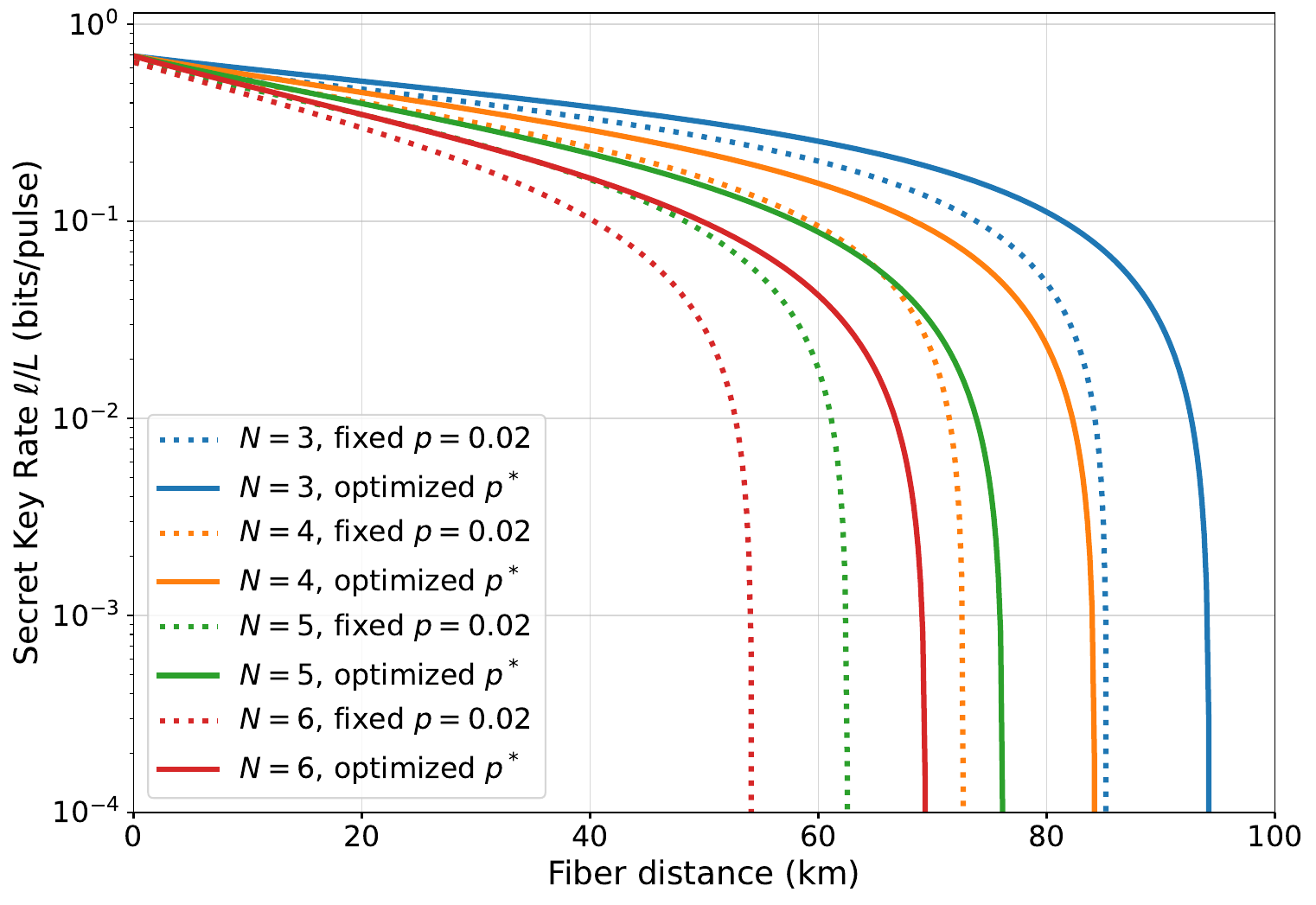}
    \caption{Comparison of the finite-key SKR as a function of
fiber distance for a fixed parameter-estimation probability
$p=0.02$ and the Brent-optimized probability $p^\star$ for different numbers of parties $N$.}
    \label{fig:comparison_optimization}
\end{figure}

Fig.~\ref{fig:comparison_asymptotic} compares the Brent-optimized
finite-key SKR against the asymptotic rate of Eq.~\eqref{eq:asymptotic_rate}
for each $N$. As expected, the asymptotic curve upper-bounds the
finite-key curve at every distance and $N$, since it neglects the
statistical fluctuation and leakage terms of
Eq.~\eqref{eq:theorem3}. The gap between the two- both in asymptotic SKR and in optimized SKR, quantifies the unavoidable
composable-security cost of finite-block operation. It reflects  the
overhead required to certify a fixed security parameter
$\varepsilon_{\mathrm{tot}}$ from a finite sample, on top of the
already-optimized protocol parameters. Reporting this gap alongside
the optimized finite-key result, rather than the asymptotic rate alone,
as is common in idealized treatments, gives a realistic
upper-performance benchmark against which future experimental
multipartite QCKA demonstrations over fiber can be assessed.

\begin{figure}[!htbp]
    \centering
    \includegraphics[width=0.9\linewidth]{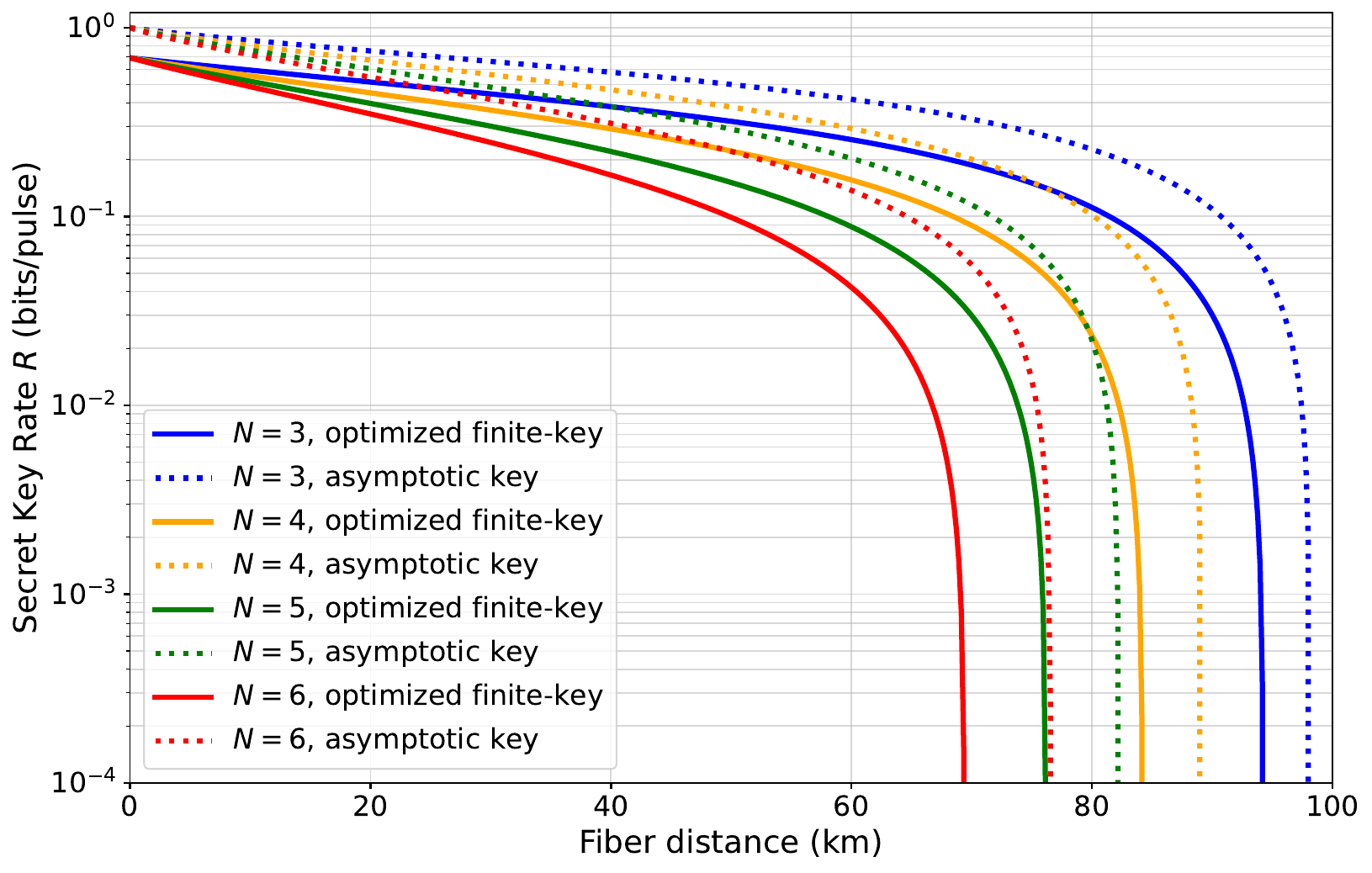}
    \caption{Comparison between the Brent-optimized finite-key SKR (solid curves) and the asymptotic SKR (dotted curves)
for different numbers of parties $N$ under the dark-count-enabled
erasure-Pauli channel.}
    \label{fig:comparison_asymptotic}
\end{figure}

\begin{table}[!htbp]
\centering
\caption{Summary of key output metrics (fixed-$p$, Brent-optimized, and asymptotic).}
\label{tab:skr_summary}
\begin{tabular}{ccccc}
\hline
\multirow{2}{*}{$N$} & \multicolumn{3}{c}{SKR range (km)} & SKR range gain \\
 & Fixed $p$ & Optimized $p^\star$ & Asymptotic & (opt. vs. fixed) \\
\hline
3 & 85.17 & 94.19 & 98.00 & 10.59\% \\
4 & 72.75 & 84.17 & 88.98 & 15.70\% \\
5 & 62.53 & 76.15 & 82.16 & 21.78\% \\
6 & 54.11 & 69.34 & 76.55 & 28.14\% \\
\hline
\end{tabular}
\end{table}

Fig.~\ref{fig:p_star} shows the optimized parameter-estimation probability
$p^\star$ as a function of fiber distance for every $N$.
At short distances, all configurations exhibit a similar
$p^\star$ of approximately $0.05$, indicating that the statistical
sampling requirement is relatively stable when the channel
errors are small. As the fiber distance increases, $p^\star$
increases progressively because the growing $Q_{AB_i}$ and
$Q_X$ require a larger parameter-estimation sample to maintain
the finite-key statistical bound $\xi(\epsilon,n,m)$ in Eq.~\ref{eq:xi}.
The increase occurs earlier for larger $N$, consistent with the
faster degradation of the global phase-error rate $Q_X$ observed
in Fig.~\ref{fig:qber_distance}. Thus, larger conference sizes require a greater fraction
of the available rounds to be devoted to parameter estimation
as the channel approaches its secure-distance limit. At the end of each curve, $p^\star$ approaches the numerical
search boundary $p_{\max}=0.5-10^{-4}$. This endpoint does not
represent a physically meaningful optimum; rather, it occurs
when the finite-key secret key length becomes zero for all
admissible values of $p$. The corresponding distance therefore
coincides with the collapse of the finite-key SKR observed in
Fig.~\ref{fig:comparison_optimization}. Hence, Fig.~\ref{fig:p_star} illustrates how the optimal allocation of
communication rounds adapts to increasing channel degradation
and how this allocation becomes constrained as the system
approaches its maximum secure transmission distance.

\begin{figure}[!htbp]
    \centering
    \includegraphics[width=0.9\linewidth]{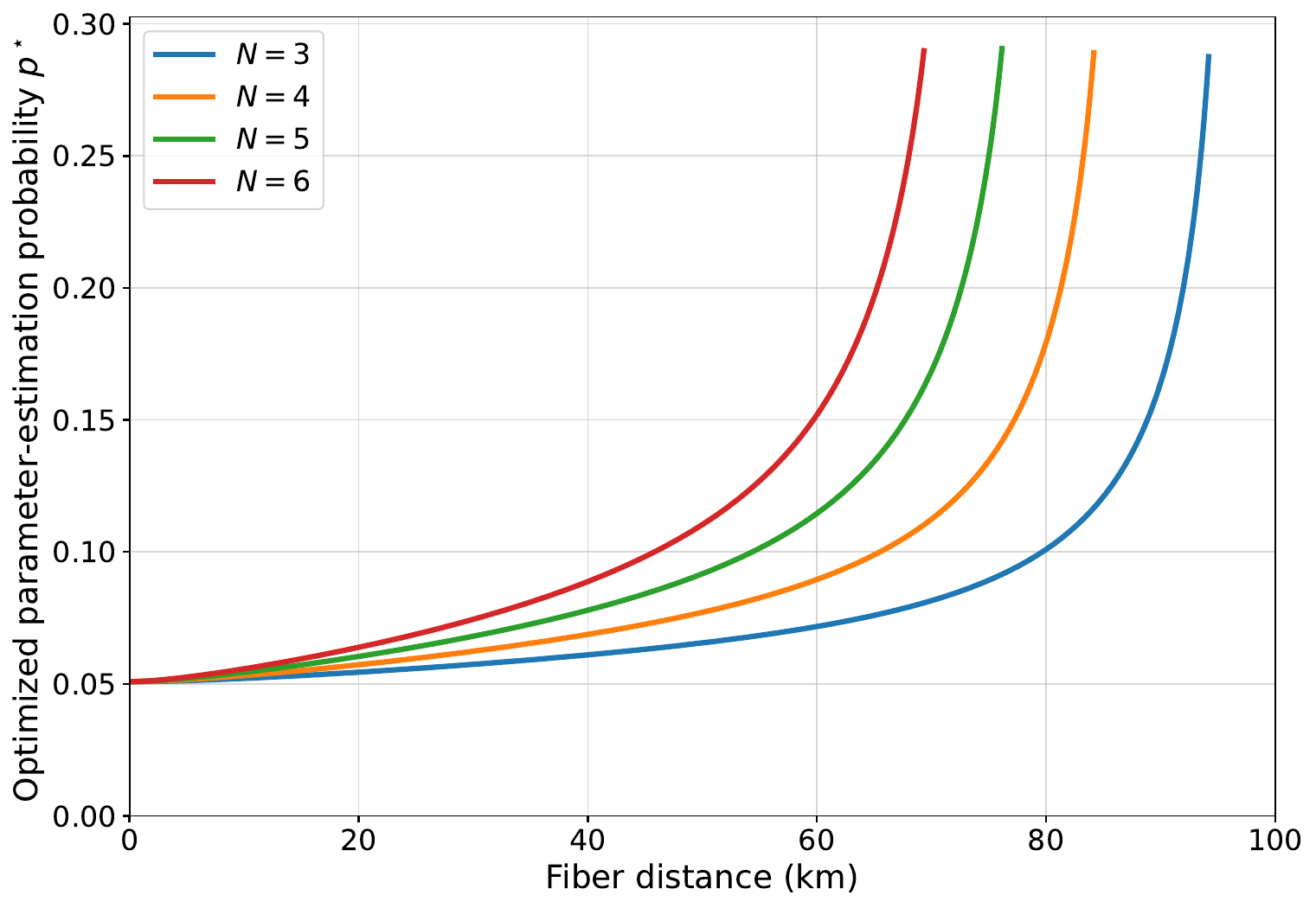}
    \caption{Brent-optimized parameter-estimation probability $p^{\star}$
as a function of fiber distance for different numbers of parties $N$. The curves increase with distance and approach the search bound $p_{\max}$ near the corresponding maximum secure distance.
}
\label{fig:p_star}
\end{figure}


Fig.~\ref{fig:qx_density} presents the variation of the phase-error rate $Q_X$ as a function of fiber distance and the PMD coefficient $D_{\mathrm{PMD}}$ for $N=3,4,5,$ and $6$ parties. The phase-error rate is evaluated using the GHZ-state coherence term after propagation through the dark-count-modified erasure--Pauli channel. For all configurations, $Q_X$ increases with both fiber distance and $D_{\mathrm{PMD}}$. The increase with distance results from the combined effect of fiber attenuation and polarization dephasing, while increasing $D_{\mathrm{PMD}}$ reduces the dephasing length $L_{\mathrm{DH}}$, thereby causing a faster loss of GHZ coherence. A clear dependence on the number of parties can also be observed. As $N$ increases, the coherence contribution is raised to the power $N-1$, making the phase-error rate more sensitive to channel imperfections. Consequently, the region of low $Q_X$ becomes progressively smaller as the number of parties increases from $N=3$ to $N=6$. In the high-distance and high-$D_{\mathrm{PMD}}$ region, $Q_X$ approaches its maximum value of $0.5$, indicating the loss of useful phase coherence. The red dashed contours mark the region where $Q_X$ approaches this saturation value. These results demonstrate that PMD-induced dephasing, together with fiber attenuation and increasing network size, significantly affects the phase-error performance of the multipartite QKD system.

\begin{figure}[!htbp]
    \centering
    \includegraphics[width=0.9\linewidth]{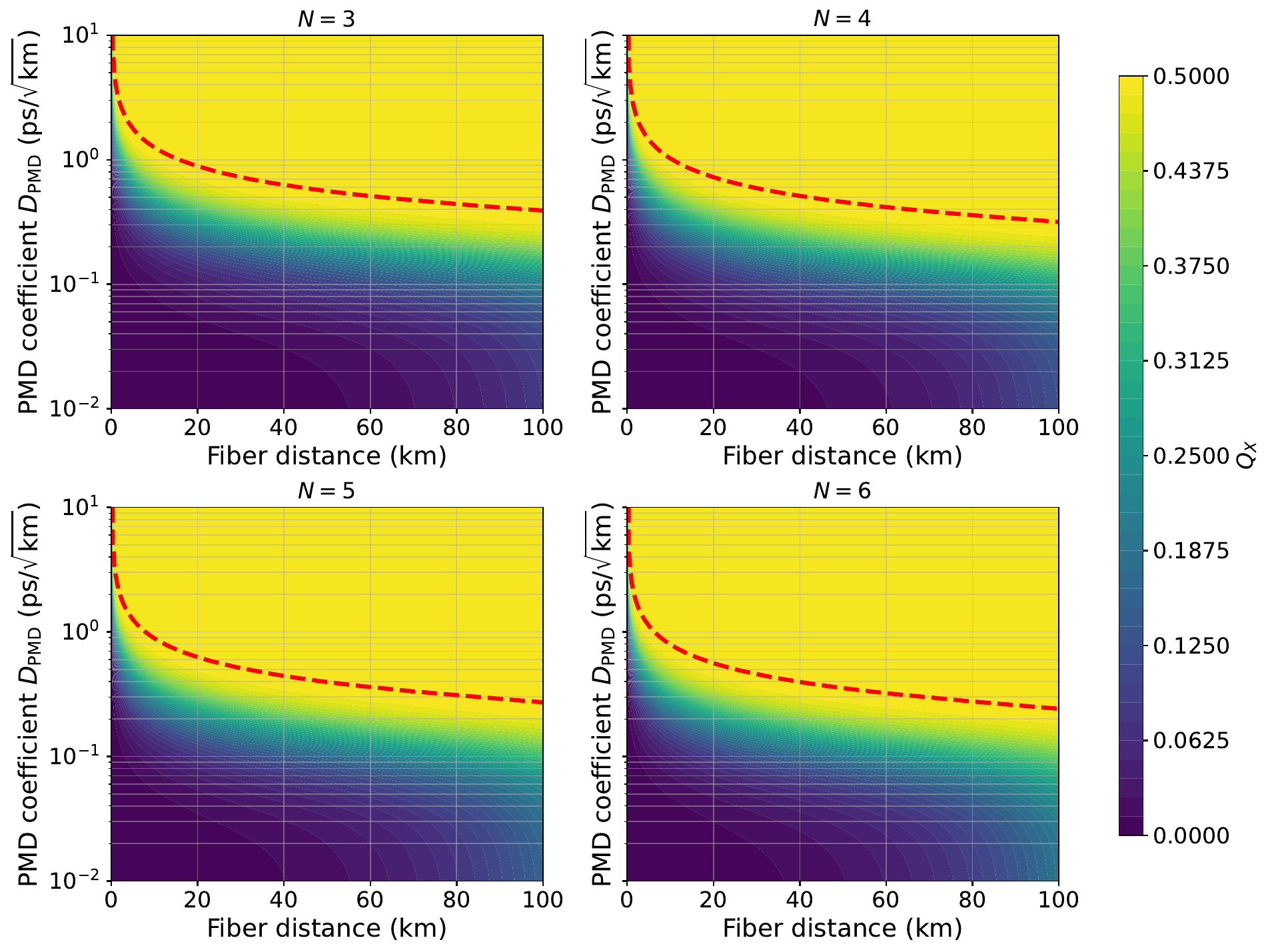}
    \caption{Phase-error rate $Q_X$ as a function of fiber distance $d$
and PMD coefficient $D_{\mathrm{PMD}}$ (log scale), shown separately
for $N=3,4,5,6$. Color encodes $Q_X$ from $0$ (dark purple, coherent)
to $0.5$ (yellow, fully decohered). The red dashed contour marks
$Q_X\approx0.5$, the practical decoherence boundary beyond which the
GHZ phase coherence is destroyed.
}
\label{fig:qx_density}
\end{figure}

Fig.~\ref{fig:skr_density} illustrates the dependence of the optimized finite-key secret key rate on the fiber distance and the PMD coefficient for different numbers of parties, $N=3,4,5,$ and $6$. The finite-key SKR is evaluated using the security expression in Eq.~\ref{eq:theorem3}, with the parameter-estimation probability $p$ optimized independently for each combination of fiber distance and PMD coefficient. As the fiber distance increases, the channel transmissivity decreases due to fiber attenuation, which increases the bit- and phase-error contributions and consequently reduces the achievable SKR. Similarly, an increase in $D_{\mathrm{PMD}}$ reduces the dephasing length $L_{\mathrm{DH}}$, thereby increasing the polarization-induced phase errors and further degrading the SKR. The results also show a clear dependence on the number of parties. As $N$ increases, the phase-error contribution becomes more pronounced because the GHZ coherence term depends on the number of transmitted receiver qubits. Consequently, the region supporting a non-zero secret key progressively decreases with increasing $N$. The $N=3$ configuration maintains a non-zero SKR over a comparatively broader range of fiber distance and PMD coefficient, whereas the secure region contracts for $N=4$, $5$, and $6$. The red dashed boundary indicates the transition toward a vanishing SKR and separates the parameter region in which secret-key generation is possible from the region where the finite-key rate becomes effectively zero. Overall, the figure demonstrates the combined impact of fiber attenuation, PMD-induced dephasing, and increasing network size on the finite-key performance of the proposed multipartite QKD system.

\begin{figure}[!htbp]
    \centering
    \includegraphics[width=0.9\linewidth]{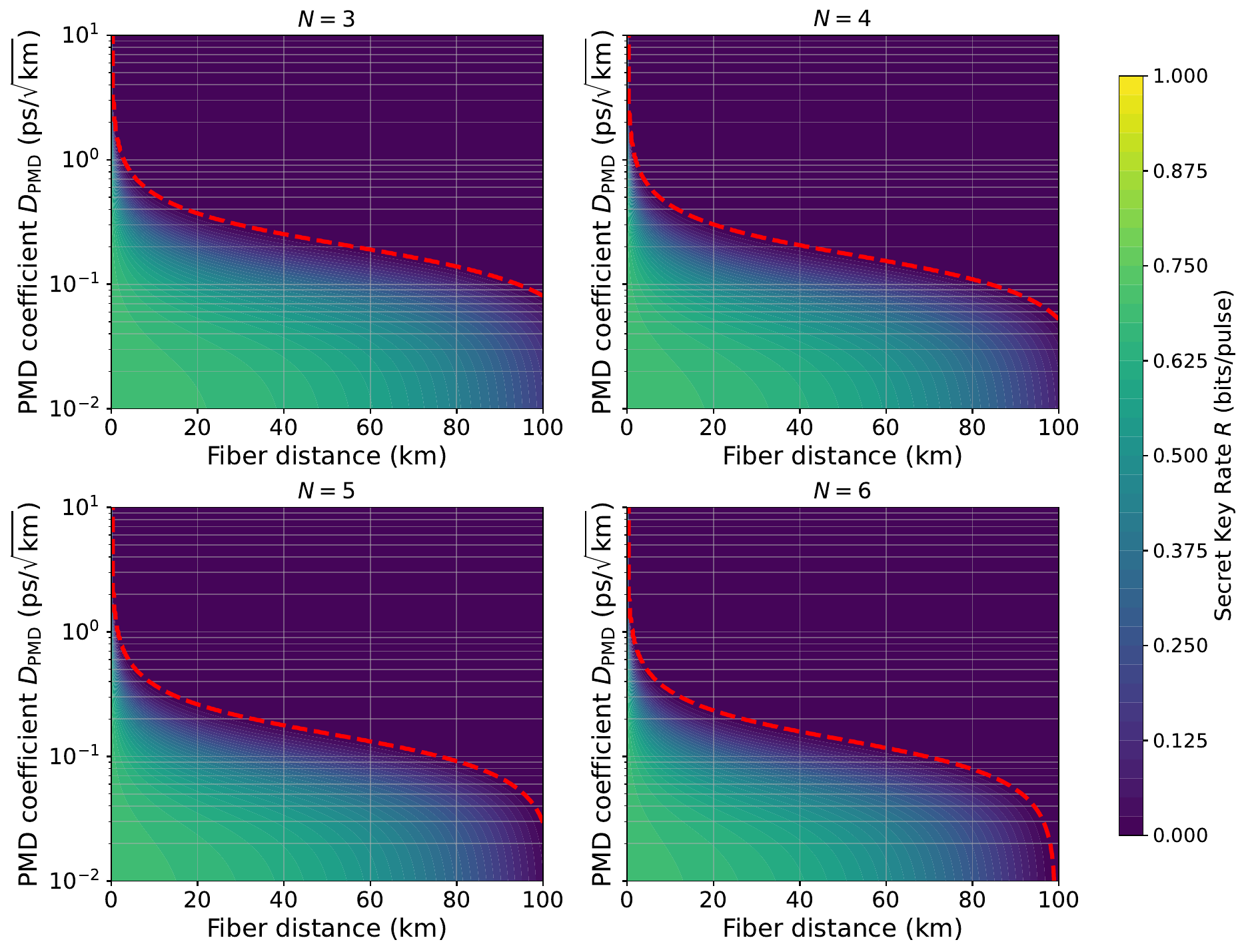}
    \caption{Two-dimensional density plots of the optimized finite-key SKR $R$ as a function of fiber distance and PMD coefficient $D_{\mathrm{PMD}}$ for different numbers of parties, $N=3,4,5,$ and $6$. The red dashed boundary indicates the transition
toward a vanishing SKR.
    }
    \label{fig:skr_density}
\end{figure}

\section{Conclusion}
\label{sec:conclusion}
In this work, we have analyzed the performance of GHZ-state-based
multipartite quantum communication over a dark-count-modified
erasure-Pauli channel. By explicitly propagating the GHZ density
matrix through this physical channel model, we derived closed-form
expressions for the local $Z$-basis bit-error rate and the global
$X$-basis phase-error rate and showed that the two are governed by
structurally distinct physical mechanisms. The former depends only on
fiber transmissivity and dark-count probability, while the latter
depends on the $(N-1)$-fold product of the surviving per-link
coherence factor and therefore degrades markedly faster with both
distance and party number. These channel-specific error rates were
integrated into a composable finite-key secret-key-rate framework,
and the parameter-estimation probability was optimized using Brent's
derivative-free method, which improved the achievable finite-key SKR
and maximum secure distance relative to a fixed-allocation baseline. Numerical results for varying a number of users, a
range of fiber distances, and PMD coefficients confirmed that
increasing fiber distance, PMD, and network size all reduce the
achievable SKR. Furthermore, comparison with the asymptotic SKR quantified the finite-key penalty arising from statistical fluctuations and security
correction terms. In this work we have assumed a homogeneous configuration with all Bobs equidistant from the source. Future extensions of this work could include extending the QCKA model to asymmetric configurations, where different Bobs experience different channel conditions. 


\bibliographystyle{IEEEtran}
\bibliography{ref}

\end{document}